\begin{document}

\title{Localization in  Quantum Walks on a Honeycomb Network}
\author{Changyuan Lyu}
\author{Luyan Yu}
\author{Shengjun Wu}
\affiliation{Kuang Yaming Honors School, Nanjing University, Nanjing, Jiangsu 210023, China}
\date{\today}

\begin{abstract}
We systematically study the localization effect in discrete-time quantum walks on a honeycomb network and establish the mathematical framework. We focus on the Grover walk first and rigorously derive the limit form of the walker's state, showing it has a certain probability to be localized at the starting position. The relationship between localization and the initial coin state is concisely represented by a linear map. We also define and calculate the average probability of localization by generating random initial states. Further, coin operators varying with positions are considered and the sufficient condition for localization is discussed. We also similarly analyze another four-state Grover walk. Theoretical predictions are all in accord with numerical simulation results. Finally, our results are compared with previous works to demonstrate the unusual trapping effect of quantum walks on a honeycomb network, as well as the advantages of our method.
\end{abstract}

\pacs{05.40.Fb, 03.67.Lx}
\maketitle

\newcommand{\ket}[1]{|#1\rangle}
\newcommand{\ketp}[1]{\ket{#1}_p}
\newcommand{\bra}[1]{\langle #1|}
\newcommand{\inp}[2]{\langle #1|#2\rangle}
\newcommand{\dif}{\mathrm{d}}
\newtheorem{theorem}{\textbf{Theorem}}
\newtheorem{lemma}{\textbf{Lemma}}
\newtheorem{proposition}{\textbf{Proposition}}

\section{Introduction}
Quantum walks are generally classified into two categories: discrete-time and continuous-time quantum walks~\cite{PhysRevA.48.1687,Meyer1996,ComprehensiveReview,PhysRevA.74.030301}. In this paper we focus on the former which were first introduced by Aharonov \textsl{et al.} in 1993. A discrete-time quantum walk usually involves a quantum coin and a walker. The coin controls the walker's movement. Since a quantum coin could be in any superposition state of up and down, quantum walks exhibit many different features when compared to classical random walks (see Refs.~\cite{Mackay2002,childs2002,PhysRevE.69.026119,Carneiro2005,Schmitz09,Konno2002Quantum}). For example, the quantum walker spreads at a rate linear in time in the one-dimensional Hadamard walk~\cite{Konno2002Quantum}, much faster than the square-root rate in the classical counterpart.\par
Another uncommon phenomenon in quantum walks is localization. For a classical random walk on a line or a square network, as well as the one-dimensional Hadamard walk, the probability of finding the walker at a specific position approaches zero when the number of steps is large enough~\cite{ComprehensiveReview,PhysRevE.72.056112}. However, Tregenna  \textsl{et al.} numerically discovered that in quantum walks on a 2D square lattice, the walker under the control of Grover's operator~\cite{Grover1996} was highly probable to be found at its initial position, which was called the walker's localization~\cite{Tregenna2003}. Inui \textsl{et al.} then rigorously analyzed this phenomenon and owed localization to degenerate eigenvalues of the system's time-evolution operator~\cite{PhysRevA.69.052323}. Later localization was also found in quantum walks on other kinds of lattices~\cite{PhysRevE.72.056112} or with other coin operators~\cite{PhysRevA.77.062331,PhysRevLett.100.020501,PhysRevA.78.032306,PhysRevA.91.022308}. Additional time-dependent phase gates could also cause localization in a 2D alternate Hadamard walk according to literature~\cite{PhysRevA.91.012328}.\par
Some quantum walks have been successfully employed to design effective quantum search algorithms~\cite{AMBAINIS2003,PhysRevA.67.052307,Abal2010,Childs:2003db,Childs2003}, whose target is to transform the system to a specific state at a probability of $O(1)$. Thus the localization effect in quantum walks shows potential applications in designing fast search algorithms. Some simple quantum walk models have already been implemented by experiments~\cite{PhysRevA.61.013410,PhysRevA.65.032310,PhysRevLett.100.170506,Karski10072009,PhysRevA.86.052327,Jeong:2013cg,wang2013physical}. As the structure of graphene, the honeycomb lattice has been paid much attention to recently. It may have application prospects in quantum controlling and computing based on 2D hexagonal materials like the graphene and h-BN~\cite{Pakdel2012256}. Besides, since the honeycomb lattice is more complicated than a line or a common 2D square lattice, we could expect more different characteristics in quantum walks on a honeycomb network. \par
This paper is organized as follows: Sec.\ref{QWframework} is devoted to the definition and mathematical framework of quantum walks on a honeycomb network. %We first focus on Grover walk and obtain the analytical expression of the state vector.
In Sec.~\ref{Limit of the state ket}, we calculate the ket in the limit $t\rightarrow\infty$ and show the existence of localization in the Grover walk, which is in accord with numerical simulations. Then in Sec.~\ref{Conditions and Dependence of Localization}, we investigate the dependence of localization on the initial state and derive a non-zero average localization probability. Quantum walks governed by other coin operators are considered and the condition for localization is proposed. Furthermore in Sec.~\ref{4DCondition}, an extended model with a four-state coin is analyzed. In Sec.~\ref{Comparison}, we compare our results to previous models and show the different properties of quantum walks on a honeycomb network.
\section{Quantum Walks on a honeycomb network}\label{QWframework}
\subsection{\label{twoSpaces}The position space $\mathcal{H}_p$ and the coin space $\mathcal{H}_c$}
The quantum walk of a particle is usually described by two Hilbert spaces, the coin space $\mathcal{H}_c$ and the position space $\mathcal{H}_p$. Different from an infinite square lattice or an unrestricted line, a honeycomb network is not a kind of Bravais lattice, for here are two types of vertices~\cite{Kittel,Zhai2014}. For example, after we translate the network along the vector $\vec{E}_1$ in FIG.~\ref{honeycomb}, the new lattice is not identical to the original one. Besides, orthogonal Cartesian coordinates, e.g. $X$-$Y$ in FIG.~\ref{honeycomb}, are inconvenient for describing the positions of vertices here. Therefore, we divide the network into rhombuses and set up an oblique coordinate system. A vertex is labeled as $(x,y,s)$, where $x$ and $y$ stand for coordinates of the lower left corner of the rhombus in which the vertex is contained, and $s$ indicates the type of the vertex. We make the convention that a black (white) vertex is type-0 (type-1), thus any position eigenket is denoted as $\ketp{x,y,s}$, or $\ketp{\vec{r}, s}$ in short, where $\vec{r}$ represents $(x,y)$.\par
\begin{figure}[t]
\centering
\includegraphics[width=0.4\textwidth]{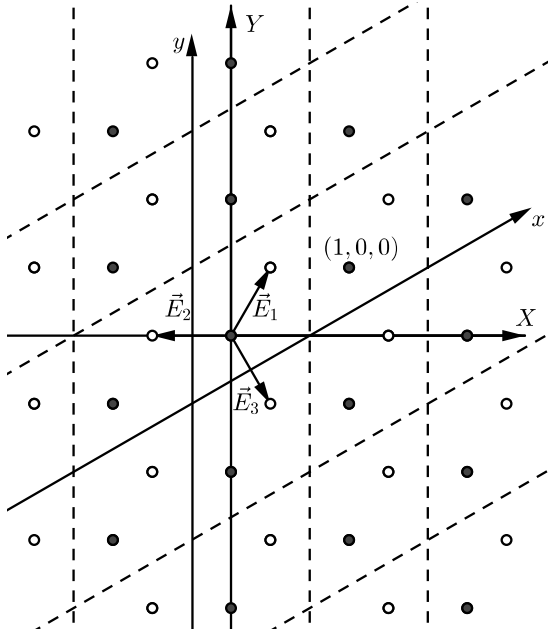}
\caption{Honeycomb network. Black and white points are two types of vertices. The distance between two nearest vertices is set to $\frac1{\sqrt 3}$.} \label{honeycomb}
\end{figure}
On a honeycomb lattice, a particle is able to move in three directions, so the dimension of the coin space is three. The eigenstates of the coin are labeled as $\ket{1}$, $\ket{2}$, and $\ket{3}$, corresponding to movement along (against) $\vec{E}_1$, $\vec{E}_2$ and $\vec{E}_3$ from a type-0 (type-1) vertex. Whether the walker will move \emph{along} or \emph{against} these vectors is determined by the \emph{type} of the vertex. The system's total state vector is denoted by a ket in the composite Hilbert space $\mathcal{H}_c \otimes \mathcal{H}_p$.

\subsection{Time-evolution}
A single step of quantum walks starts with applying the coin operator $C$ on the coin space. We first choose Grover's operator $G$ as the coin operator. The three-dimensional form of $G$ reads~\cite{Grover1996}
$$G=\frac{1}{3}\left[\begin{array}{ccc}-1 & 2 & 2 \\2 & -1 & 2 \\2 & 2 & -1\end{array}\right].$$\\
Then, the particle is moved by the conditional shift operator $S$, expressed as~\cite{Abal2010}
\begin{equation}
S=\sum_{j,s,\vec{r}}{\ket{j}\bra{j}\otimes\ketp{\vec{r}-(-1)^s\vec{v}_j,s\oplus1}\bra{\vec{r},s}},
\end{equation}
where
$$\vec{v}_1=(0,0),\, \vec{v}_2=(1,0),\, \vec{v}_3=(0,1).$$
Afterward the next step starts. During a single step, the position type $s$ is changed by the operator $S$, so the particle will move in opposite directions in the following step. Therefore this is a kind of ``flip-flop'' shift~\cite{Ambainis2005}.\par
The time-evolution operator $U$ of the total system is $U=S\cdot(C\otimes \mathbb{I}_p)$ where $\mathbb{I}_p$ is the identity operator acting on the position space. Accordingly the system's state $\ket{\Psi_t}$ after $t$ steps is derived by applying $U$ on the initial state for $t$ times:
\begin{equation}
\ket{\Psi_t}=(U)^t\ket{\Psi_0}.
\end{equation}\par
The state vector $\ket{\Psi_t}$ can be expanded in terms of all the coin and position eigenkets as
\begin{equation*}
\ket{\Psi_t}=\sum_{j, \vec{r},s}\psi_{t}^{j}(\vec{r}, s)\ket{j}\otimes\ketp{\vec{r},s}.
\end{equation*}
We introduce the vector
\begin{equation}
\ket{\psi_t(\vec{r},s)}=[\psi_{t}^{1}(\vec{r}, s),\psi_{t}^{2}(\vec{r}, s),\psi_{t}^{3}(\vec{r}, s)]^{\mathrm T},
\end{equation}
where the superscript $\mathrm T$ means transpose. The probability of finding the particle at the position $(\vec{r}, s)$ equals the squared norm of $\ket{\psi_t(\vec{r},s)}$ if its position is measured. \par
Taking into consideration the topology of the honeycomb network, we can obtain the recursive relations
\iffalse
\begin{widetext}
\begin{subequations}
\label{recursiveRelationsInRealSpace}
\begin{eqnarray}
\ket{\psi_{t+1}(x,y,0)}=C_1\ket{\psi_t(x,y,1)}+C_2\ket{\psi_t(x-1,y,1)}+C_3\ket{\psi_t(x,y-1,1)},\\
\ket{\psi_{t+1}(x,y,1)}=C_1\ket{\psi_t(x,y,0)}+C_2\ket{\psi_t(x+1,y,0)}+C_3\ket{\psi_t(x,y+1,0)},
\end{eqnarray}
\end{subequations}
\end{widetext}\fi
\begin{equation}\label{recursiveRelationsInRealSpace}
\ket{\psi_{t+1}(\vec{r},s)}=\sum_{j=1}^3C_j\ket{\psi_{t}(\vec{r}-(-1)^s\vec{v}_{j},s\oplus1)}
\end{equation}
where $C_j$ is a matrix whose $j$th row is the same as the $j$th row of the coin operator $C$ and other elements are all zeros, e.g.
\begin{equation*}
C_2=\frac{1}{3}\left[\begin{array}{ccc}0 & 0 & 0 \\2 & -1 & 2 \\0 & 0 & 0\end{array}\right].
\end{equation*}
Applying discrete Fourier transformation (DFT, see Appendix A) to two sides of Eq.~\eqref{recursiveRelationsInRealSpace}, we get the recursive relations in Fourier space:
\begin{subequations}
\label{recursiveRelationsInKLSpace}
\begin{eqnarray}
\ket{\tilde\psi_{t+1}(k,l,0)}&=&MC\ket{\tilde\psi_t(k,l,1)},\\
\ket{\tilde\psi_{t+1}(k,l,1)}&=&M^\dagger C\ket{\tilde\psi_t(k,l,0)},
\end{eqnarray}
\end{subequations}
where $M=\text{Diag}\{1,e^{-ik},e^{-il}\}$ and $k$ and $l$ are wave numbers corresponding to coordinates $x$ and $y$. \par
Assume the particle starts from position $(0,0,0)$ and denote the initial state of the coin $\alpha\ket{1}+\beta\ket{2}+\gamma\ket{3}$ as $[\alpha,\beta,\gamma]^{\mathrm T}$. The initial state of the system is therefore
\begin{equation}
\ket{\Psi_0}=[\alpha,\beta,\gamma]^{\mathrm T}\otimes\ketp{0,0,0},
\end{equation}
i.e., $$\ket{\psi_0(x,y,s)}=\delta_{s,0}\delta_{y,0}\delta_{x,0}[\alpha,\beta,\gamma]^{\mathrm T}, \forall x,y\in\mathbb{Z}.$$
$\delta_{a,b}$ is the Kronecker's delta. After DFT,
\begin{equation}\label{tildepsi_0(k,l,0)}
\ket{\tilde\psi_{0}(k,l,s)}=\delta_{s,0}[\alpha,\beta,\gamma]^{\mathrm T}, \forall k,l \in (-\pi,\pi].
\end{equation}
According to Eq.~\eqref{recursiveRelationsInKLSpace},
\begin{equation}\label{UEvolution}
\ket{\tilde\psi_{t}(k,l,0)}=(MCM^\dagger C)^{t/2}\ket{\tilde\psi_{0}(k,l,0)}
\end{equation}
when $t$ is even. Thus we need to explore the properties of the evolution operator $\tilde U=MCM^\dagger C$.

\section{\label{Limit of the state ket}Limit of the state vector}
Since $\tilde U$ is unitary, its eigenvalues can be denoted as $e^{i\xi_j}$. It is not straightforward to show that $\xi_1=0$ and $\xi_2=-\xi_3=\theta$
%\begin{equation}
%\xi_j=\begin{cases} 0&j=1 \\ \theta & j=2 \\ -\theta & j=3  \end{cases},
%\end{equation}
where $\theta$ satisfies
\begin{equation}\label{costheta}
\cos \theta = \frac{1}{9}[4\cos k+4\cos l+4\cos(k-l)-3].
%\sin \theta =  \sqrt{\frac{(-4 \cos (k-l)-4 \cos (k)-4 \cos (l)+3)^2}{81}-1}
\end{equation}
Eq.~\eqref{costheta} essentially agrees with the results in Refs.~\cite{machida2015limit,Higuchi20144197}. The corresponding eigenvectors are
\begin{widetext}
\begin{equation}\label{eigenvector}
\ket{\phi_j}=\frac{1}{N_j}
\left[
\begin{array}{c}
e^{ i(l-k+\xi_j ) }+e^{ i(k-l+\xi_j ) }+e^{ i(\xi_j -k) }+e^{ i(\xi_j -l) }-e^{ -ik }-e^{ -il }+\frac { 1   }{ 2 }e^{ i\xi_j} -\frac { 9 }{ 4 } e^{ 2i\xi_j  }-\frac { 1 }{ 4 }  \\
-e^{ i(l-k+\xi_j ) }+\frac { 1 }{ 2 }e^{ i(\xi_j -k) } +\frac { 1 }{ 2 }e^{ -ik } -e^{ -il }+\frac{1}{2}e^{ i\xi_j}+\frac { 1 }{ 2 }  \\
-e^{ i(k-l+\xi_j ) }+\frac { 1}{ 2 }e^{ i(\xi_j -l) }  +\frac {1}{ 2 } e^{ -il }  -e^{ -ik }+\frac { 1 }{ 2 }e^{ i\xi_j  } +\frac { 1 }{ 2 }
 \end{array}
 \right],
\end{equation}
\end{widetext}
where $N_j$ is a normalization factor.\par
To determine whether the localization will occur, one needs to calculate $P_{\infty}(0,0,0)$, i.e., the probability for the particle to be found at its starting point $(0,0,0)$ in the limit $t\rightarrow\infty$:
\begin{equation}\label{P(0,0,0)}
P_{\infty}(0,0,0)=\lim_{t\rightarrow\infty}\inp{\psi_{t}(0,0,0)}{\psi_{t}(0,0,0)}.
\end{equation}
Obviously, $P_{t}(0,0,0)=0, \forall \text{ odd } t$. As a result, we only need to pay attention to even-$t$ cases.\par
Substitute the results above into Eq.~\eqref{UEvolution},
\begin{equation}\label{psi_{t}(k,l,0)}
\ket{\tilde\psi_{t}(k,l,0)}=\sum_{j=1}^{3}e^{i\xi_j t/2}\ket{\phi_j}\inp{\phi_j}{\tilde\psi_{0}(k,l,0)}.
\end{equation}
Then the state ket in real space is derived by applying inverse DFT to Eq.~\eqref{psi_{t}(k,l,0)},
\begin{equation}\label{psy(x,y,0)}
\ket{\psi_t(x,y,0)}=\frac1{(2\pi)^2}\int_{V_2}\dif k\dif l e^{ikx+ily}\ket{\tilde\psi_{t}(k,l,0)},
\end{equation}
specially,
\begin{equation}\label{psy(0,0,0)}
\ket{\psi_t(0,0,0)}=\frac1{(2\pi)^2}\sum_{j=1}^{3}\int_{V_2}\dif k\dif l e^{i\xi_j t/2}\ket{\phi_j}\inp{\phi_j}{\tilde\psi_{0}(k,l,0)}.
\end{equation}\par
To further simplify the expression of $\ket{\psi_t(0,0,0)}$, we first introduce a mathematical lemma.
\begin{lemma}\label{th1}
Suppose $\varphi(x,y)\in L^1([a,b] \times [c,d])$ and it satisfyies that $a.e.\, x\in[a,b]$,
$\varphi_x(y) = \varphi(x,y)$ is a countably piecewise monotonic function of y, and in each monotonic interval $(c_n,d_n)$, $\varphi_x(y) \in C^2_B( (c_n,d_n))$ and the set $\{y\, |\, \partial_y\varphi(x,y) = 0\}$ is countable.
Let $f(x,y) = g(x,y)+i h(x,y)$ and $g,h \in L^1([a,b]\times [c,d])$. Then
 \begin{equation*}
   \lim\limits_{t \rightarrow \infty} \int_a ^b \int_c ^d f(x,y) e^{i  \varphi (x,y) t } \, \dif{y}\dif{x} = 0.
 \end{equation*}
\end{lemma}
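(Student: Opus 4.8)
The plan is to recognize this as a parametrized Riemann--Lebesgue statement and to reduce the two-dimensional oscillatory integral to the classical one-dimensional lemma. First I would fix $t$ and write $F_t(x)=\int_c^d f(x,y)e^{i\varphi(x,y)t}\,\dif y$, so the quantity of interest is $\int_a^b F_t(x)\,\dif x$. Since $f=g+ih$ with $g,h\in L^1([a,b]\times[c,d])$, Tonelli's theorem gives that $x\mapsto\int_c^d|f(x,y)|\,\dif y$ lies in $L^1([a,b])$ and dominates $|F_t(x)|$ uniformly in $t$, while $F_t$ is measurable in $x$ by Fubini applied to the $L^1$ function $fe^{i\varphi t}$. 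Hence, by the dominated convergence theorem, it suffices to prove $F_t(x)\to 0$ as $t\to\infty$ for almost every $x$.

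Fix such an $x$, taken in the full-measure set where the monotonicity and regularity hypotheses hold and $\int_c^d|f(x,y)|\,\dif y<\infty$, and write $(c,d)=\bigcup_n(c_n,d_n)$ up to a countable (hence null) set of endpoints, with $\varphi_x$ monotone on each $(c_n,d_n)$. Then $F_t(x)=\sum_n I_n(t)$ with $I_n(t)=\int_{c_n}^{d_n}f(x,y)e^{i\varphi_x(y)t}\,\dif y$ and $\sum_n\int_{c_n}^{d_n}|f(x,y)|\,\dif y=\int_c^d|f(x,y)|\,\dif y<\infty$. Given $\varepsilon>0$, choosing $N$ so that $\sum_{n>N}\int_{c_n}^{d_n}|f(x,y)|\,\dif y<\varepsilon$ forces $\sum_{n>N}|I_n(t)|<\varepsilon$ for every $t$, so it is enough to show $I_n(t)\to 0$ for each fixed $n$.

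The core step is the single monotone interval. On $(c_n,d_n)$ the function $\varphi_x$ is $C^2$ and monotone; since $\{\,y:\partial_y\varphi(x,y)=0\,\}$ is countable, $\varphi_x$ cannot be constant on any subinterval, hence is strictly monotone, hence a homeomorphism of $(c_n,d_n)$ onto an interval $J_n$ with inverse $\psi$. Assuming $\varphi_x$ increasing (the decreasing case follows by complex conjugation, replacing $\varphi_x$ by $-\varphi_x$ and $f$ by $\bar f$), I would apply the substitution $u=\varphi_x(y)$: writing $f(x,y)=\bigl(f(x,y)/\varphi_x'(y)\bigr)\varphi_x'(y)$ off the countable critical set $Z$, whose image $\varphi_x(Z)$ is again null, the change-of-variables formula for $C^1$ strictly monotone maps (proved first on compact subintervals $[c_n+1/m,d_n-1/m]$, where $\varphi_x$ is absolutely continuous, then passed to the limit) gives $I_n(t)=\int_{J_n}\tilde f_n(u)e^{iut}\,\dif u$ with $\tilde f_n(u)=f(x,\psi(u))/\varphi_x'(\psi(u))$, and the same computation with $|\tilde f_n|$ shows $\int_{J_n}|\tilde f_n|\,\dif u=\int_{c_n}^{d_n}|f(x,y)|\,\dif y<\infty$, so $\tilde f_n\in L^1(\mathbb R)$ after extension by zero. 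The classical Riemann--Lebesgue lemma then yields $I_n(t)\to 0$, completing the argument.

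The main obstacle I anticipate is the legitimacy of this change of variables in the presence of the critical points of the phase: one must verify that monotonicity together with a countable critical set forces strict monotonicity (ruling out ``flat'' pieces, on which the oscillatory factor would provide no decay), that $\varphi_x(Z)$ remains Lebesgue-null so $\tilde f_n$ is defined a.e., and that $\tilde f_n$ is genuinely integrable even though $f/\varphi_x'$ need not be bounded near the critical points --- all of which are controlled by the identity $\int_{J_n}|\tilde f_n|\,\dif u=\int_{c_n}^{d_n}|f|\,\dif y$. A minor bookkeeping point is the measurability of the fibered objects, which again follows from Fubini. I also note that only $C^1$ regularity of $\varphi_x$ and the $L^1$ hypothesis on $f$ are actually used for this qualitative conclusion; the stronger $C_B^2$ assumption is not needed here.
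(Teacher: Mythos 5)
Your proof is correct. The outer reduction---defining $F_t(x)=\int_c^d f(x,y)e^{i\varphi(x,y)t}\,\dif y$, dominating it uniformly in $t$ by $G(x)=\int_c^d|f(x,y)|\,\dif y\in L^1([a,b])$, and invoking dominated convergence to reduce to the one-dimensional statement for a.e.\ $x$---is exactly what the paper does. Where you genuinely diverge is in the one-dimensional core. The paper proceeds in two stages: it first proves $\int\cos(\varphi(y)t)\,\dif y\to 0$ for the bare phase by substituting $u=\varphi(y)$ and integrating by parts (this is where the $C^2$ hypothesis enters, through the term $\varphi''/(\varphi')^3$, which is then handled by the classical Riemann--Lebesgue lemma), and afterwards extends to general $L^1$ densities by approximating $g$ and $h$ with staircase functions. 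You instead push the density $f$ directly through the substitution $u=\varphi_x(y)$ on each monotone piece, obtaining $\int_{J_n}\tilde f_n(u)e^{iut}\,\dif u$ with $\tilde f_n(u)=f(x,\psi(u))/\varphi_x'(\psi(u))$, verify $\tilde f_n\in L^1(J_n)$ via the same change of variables applied to $|f|$, and apply the classical Riemann--Lebesgue lemma once. This is cleaner: it dispenses with both the staircase approximation and the integration by parts, and, as you note, needs only $C^1$ regularity of the phase rather than $C^2_B$. The points you flag as delicate---strict monotonicity forced by the countable critical set, nullity of the image of that set, legitimacy of the change of variables (absolute continuity of $\psi$, obtained via compact exhaustion), and integrability of $\tilde f_n$ despite the blow-up of $1/\varphi_x'$ near critical points---are precisely the right ones, and your identity $\int_{J_n}|\tilde f_n|=\int_{c_n}^{d_n}|f|$ settles them. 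Your uniform tail estimate $\sum_{n>N}|I_n(t)|<\varepsilon$ is also a tidier justification of interchanging the limit with the sum over monotone pieces than the paper's brief appeal to uniform convergence.
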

The explanations of some notations as well as the detailed proof are given in Appendix B. According to Lemma~\ref{th1}, we deduce that the contributions to $\ket{\psi_t(0,0,0)}$ from items with $j = 2, 3$ in Eq.~\eqref{psy(0,0,0)} are negligible when $t$ is large enough. This is not hard to understand because when $t$ is very large, $e^{i\xi_j t/2}$ will oscillate much faster than $\ket{\phi_j}\inp{\phi_j}{\tilde\psi_{0}(k,l,0)}$ as $\xi_j$ changes. As a result, the integral will become zero when $t$ tends to infinity and we only need to consider the term with $j=1$. Since $\xi_1=0$,
\begin{equation}\label{psyinfty(0,0,0)}
\ket{\psi_\infty(0,0,0)}\sim\frac1{(2\pi)^2}\int_{V_2}\dif k\dif l \ket{\phi_1}\inp{\phi_1}{\tilde\psi_{0}(k,l,0)}.
\end{equation}\par
In Eq.~\eqref{psyinfty(0,0,0)}, only $\ket{\phi_1}$ is a function of $k,l$ while $\ket{\tilde\psi_{0}(k,l,0)}$ is constant according to Eq.~\eqref{tildepsi_0(k,l,0)}. Consequently Eq.~\eqref{psyinfty(0,0,0)} could be interpreted as a linear map from the initial state $\ket{\tilde\psi_{0}(k,l,0)}$ to $\ket{\psi_{\infty}(0,0,0)}$. The map is represented by a transformation matrix $F$
\begin{equation}\label{F}
F=\frac1{(2\pi)^2}\int_{V_2}\dif k\dif l \ket{\phi_1}\bra{\phi_1}.
\end{equation}
The matrix $F$ contains all the information about the walker's behavior at its starting point when $t\rightarrow\infty$. The existence of localization is directly related to the system's initial state via the matrix $F$. The final form of $F$ is obtained by exploiting the results in Eq.~\eqref{eigenvector}:
\begin{equation}\label{Fmatrix}
F=\frac{1}{6}
\begin{bmatrix}
2 & -1 & -1 \\
-1 & 2 & -1 \\
-1 & -1 & 2
\end{bmatrix}.
\end{equation}
Now we take a numerical example. Assuming $[\alpha,\beta,\gamma]^{\mathrm T}=[1,0,0]^{\mathrm T}$, which means the initial state of the coin is ``moving along $\vec{E_1}$", the probability of finding the particle at the position $(0,0,0)$ after a great number of steps is $|F[1,0,0]^{\mathrm T}|^2=1/6$ accordingly. Apparently, the probability is not zero and the localization is revealed. \par

We also conduct numerical simulations and the results are illustrated in FIG.~\ref{initialState100}. A noticeable peak arises in the left of FIG.~\ref{initialState100}. The right figure manifests the probability $P_t(0,0,0)$ oscillates around our theoretical value $P_{\infty}=1/6$ and clearly exhibits a tendency to converge, indicating the walker does have a nonzero probability to be localized at its starting point. \par

\begin{figure*}
\includegraphics[width=0.9\textwidth]{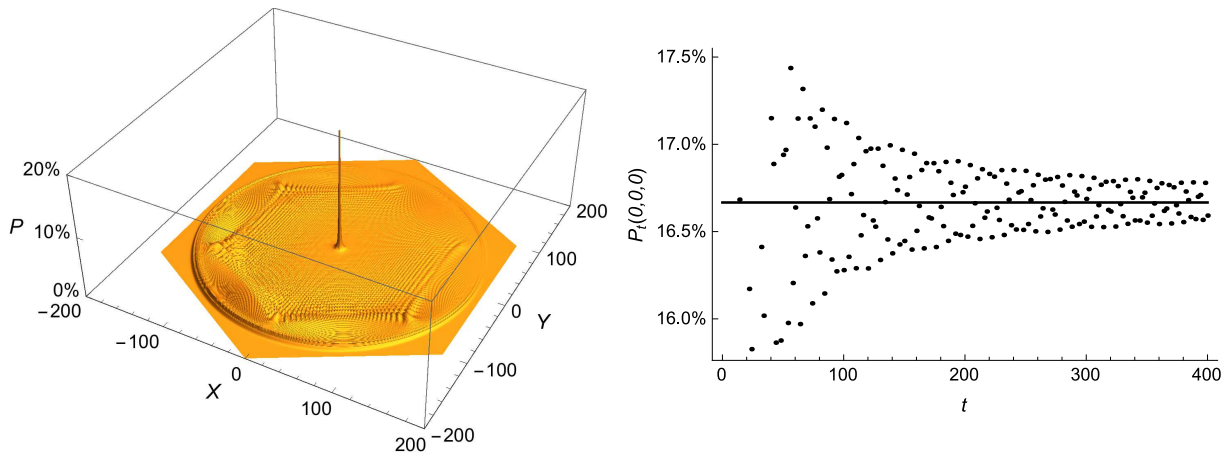}
\caption{\label{initialState100}(Color online) Numerical simulation results for the Grover walk on a honeycomb network with the initial state $\alpha=1, \gamma=\beta=0$. The left figure shows the particle's spatial probability distribution on the honeycomb network after 400 steps. Points in the right figure are the probability for finding the walker at its starting point as a function of time $t$ and the horizontal line is the theoretical result $P_{\infty}=1/6$.}
\end{figure*}\par

\section{Conditions for Localization}\label{Conditions and Dependence of Localization}

\subsection{Dependence on the initial state}\label{Dependence on the initial state}

We have seen that the limit behavior of the Grover walk is represented by Eq.~\eqref{psyinfty(0,0,0)}. The probability $P_{\infty}(0,0,0)$ is dependent on the coin's initial state $[\alpha,\beta,\gamma]^{\mathrm T}$ and it may decrease to zero under some circumstances. To verify this point, we work out the eigenvectors of $F$:
\begin{equation*}
\frac{1}{\sqrt{3}}\left[\begin{array}{c}1 \\1 \\1\end{array}\right],  \frac{1}{\sqrt 2}\left[\begin{array}{c}-1 \\0 \\1\end{array}\right] \text{ and } \frac{1}{\sqrt{6}}\left[\begin{array}{c}-1 \\2 \\-1\end{array}\right].
\end{equation*}
The corresponding eigenvalues are $0$, $1/2$ and $1/2$ respectively. So, if the initial coin state is $\frac1{\sqrt{3}}[1,1,1]^{\mathrm T}$, the uniform superposition state of the three eigenstates without phase differences, named the Grover state ~\cite{PhysRevA.78.032306}, localization will disappear. Numerical results are illustrated in the left of FIG.~\ref{minAndMax} and $P_t$ clearly converges to zero. It also manifests the Grover state is the sole initial state causing the walker to be delocalized. This result is consistent with the conclusion in Ref.~\cite{Higuchi20144197}. \par

On the other hand, the maximal value of $P_{\infty}(0,0,0)$ is $|1/2|^2=25\%$. Any initial coin state which is a linear combination of $\frac1{\sqrt{2}}[-1,0,1]^{\mathrm T}$ and $\frac1{\sqrt{6}}[-1,2,-1]^{\mathrm T}$ will lead to the maximum. For example, when the initial state $[\alpha,\beta,\gamma]^{\mathrm T}=\frac1{\sqrt{3}}[1,e^{i2\pi/3},e^{i4\pi/3}]^{\mathrm T}$, $P_{\infty}(0,0,0)$ will reach the maximal probability $25\%$ of localization (FIG.~\ref{minAndMax}). This is totally different from the Grover state above. \par
\begin{figure*}
\includegraphics[width=0.9\textwidth]{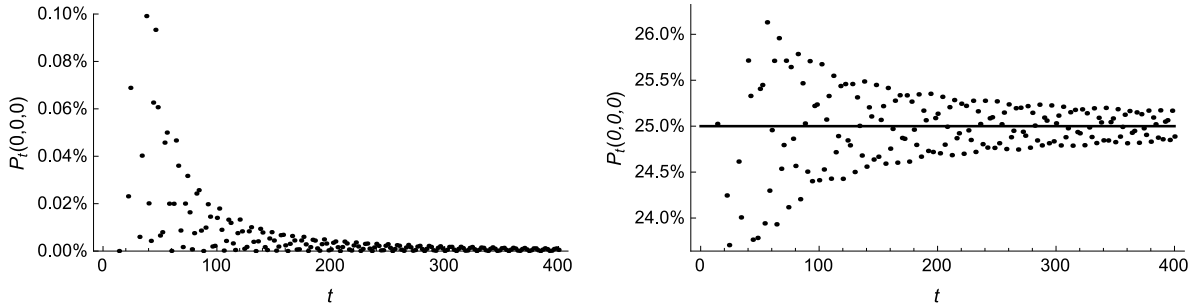}
\caption{\label{minAndMax}$P_{min}$ vs $P_{max}$. In the left figure, localization disappears when the initial state is the Grover state and in the right figure $P_{\infty}(0,0,0)$ reaches the maximal value with the initial coin state $\frac1{\sqrt{3}}[1,e^{i2\pi/3},e^{i4\pi/3}]^{\mathrm T}$.}
\end{figure*}

\subsection{Average localization probability}\label{average localization probability}
Since the probability $P_{\infty}(0,0,0)$ is dependent on the initial state of the coin, what is the average probability if the initial state is generated randomly? A random pure state of the coin can be obtained by applying a random unitary operator $U_3$ $\in$ SU(3) on an arbitrary coin state, such as $\ket{3}$ (see Sec.~\ref{twoSpaces}). The average probability is reasonably defined as
\begin{equation}\label{definitionOfPBar}
\overline{P}=\frac{1}{\Omega_3}\int{P_{\infty;\alpha,\beta,\gamma}\dif \Omega_3}=\frac{1}{\Omega_3}{\int{|FU_3\ket{3}|^2\dif \Omega_3}},
\end{equation}
where $\dif \Omega_3$ is the Haar measure of the group SU(3) and $\Omega_3=\int{\dif \Omega_3}$, standing for the volume of SU(3). A unitary operator $U_3$ is usually expressed with the generators $\lambda_j$ of SU(3) and the angular parameters $\zeta_j$ as~\cite{Tilma2002Generalized}
\begin{equation*}
U_3= e^{i\lambda_3\zeta_1}e^{i\lambda_2\zeta_2}e^{i\lambda_3\zeta_3}e^{i\lambda_5\zeta_4}e^{i\lambda_3\zeta_5}e^{i\lambda_2\zeta_6}e^{i\lambda_3\zeta_7}e^{i\lambda_8\zeta_8}.
\end{equation*}
where $0 \leqslant \zeta_1,\zeta_5 \leqslant \pi, 0\leqslant \zeta_3,\zeta_7\leqslant 2\pi, 0\leqslant \zeta_2,\zeta_4,\zeta_6\leqslant\frac{\pi}2$ and $0\leqslant\zeta_8\leqslant\sqrt 3 \pi$. Under this representation, the Haar measure $\dif \Omega_3$ of SU(3) and the random initial state $U_3\ket{3}$ read~\cite{Byrd98}
$$\dif \Omega_3=\sin 2\zeta_2\cos\zeta_4\sin^3\zeta_4\sin 2\zeta_6\prod_{j=1}^8 \dif\zeta_j.$$
\begin{equation}
U_3\ket{3}=e^{-2i\zeta_8/\sqrt{3}}
\begin{bmatrix}
e^{i(\zeta_1+\zeta_3)}\cos\zeta_2\sin\zeta_4\\
-e^{-i(\zeta_1-\zeta_3)}\sin\zeta_2\sin\zeta_4\\
\cos\zeta_4
\end{bmatrix}.
\end{equation}
Thus the average probability is obtained by finishing the integral in Eq.~\eqref{definitionOfPBar}. The result is
$$\overline{P}=\frac{1}{6}.$$\par
The result manifests that although a Grover state coin will delocalize the quantum walker, the expected probability $P_{\infty}$ of localization for a random initial state $\ket{\Psi_0}$ is still up to $16.7\%$ if we measure the walker’s position after even-t steps. Thus the localization phenomenon happens very frequently in our model and it could be useful in designing some quantum algorithms.
\par
\subsection{Dependence on coin operator}\label{Dependence on Coin Operator}
Apart from the initial coin state, it is obvious that the crucial point for the localization to occur is the existence of highly degenerate eigenvalue $1$ of the operator $\tilde U$, which is in agreement with literature~\cite{PhysRevE.72.056112,PhysRevA.69.052323}. This characteristic is undoubtedly determined by Grover's operator $G$. If there is another coin operator $C$ leading to the eigenvalue $1$ of $\tilde U$, the other two eigenvalues should also be like the form $e^{\pm i\theta}$ where $\theta$ is a function of $k$ and $l$. According to Lemma~\ref{th1}, these two terms in Eq.~\eqref{psy(0,0,0)} usually contribute less to the system's limit behavior, and the term with eigenvalue 1 may give rise to localization. Then what are the requirements for such a coin operator? We find a sufficient condition. \par

First, we extend the model of quantum walks on a honeycomb network in Sec.\ref{QWframework}. In the Grover walk, the operator $C$ is independent of position. We now release this restriction and let the coin operator be related to the position type. The time-evolution operator becomes
\begin{equation}
U=S\cdot(C\otimes \sum_{\vec{r}}\ket{\vec{r},0}\bra{\vec{r},0}+D\otimes \sum_{\vec{r}}\ket{\vec{r},1}\bra{\vec{r},1}).
\end{equation}
Consequently, the evolution operator $\tilde U$ in Fourier space is modified,
\begin{equation}
\tilde U=MDM^\dagger C.
\end{equation}\par
\paragraph*{\bf The sufficient condition:} If $D=C^\dagger$ and the eigenvectors of \/$C$ are all real vectors (overall phases are omitted), $1$ will be one of $\tilde U$'s eigenvalues and the localization is still present. \par
The proof is given in Appendix C. The Grover walk we have studied above is a special case since $C=G=G^\dagger=D$. Here we also give another example of the coin operator
\begin{equation}
H=\frac1{\sqrt{3}}
\left[
\begin{array}{ccc}
 1 & 1 & 1 \\
 1 & e^{i\frac{2 }{3}\pi} & e^{i\frac{ 4}{3}\pi} \\
 1 & e^{i\frac{4}{3}\pi } & e^{i\frac{2}{3}\pi} \\
\end{array}
\right].
\end{equation}
$H$ looks much more like the three-dimensional version of Hadamard matrix and $H$ satisfies all the conditions above. We take $C=H$ and $D=H^\dagger$ with an initial coin state $[1,0,0]^{\mathrm T}$ and perform numerical simulations. The result is given in FIG.~\ref{H(1,0,0)}, which obviously demonstrates that the walker has a nonzero probability to stay at its starting point when $t\rightarrow\infty$.\par
\begin{figure*}
\includegraphics[width=0.9\textwidth]{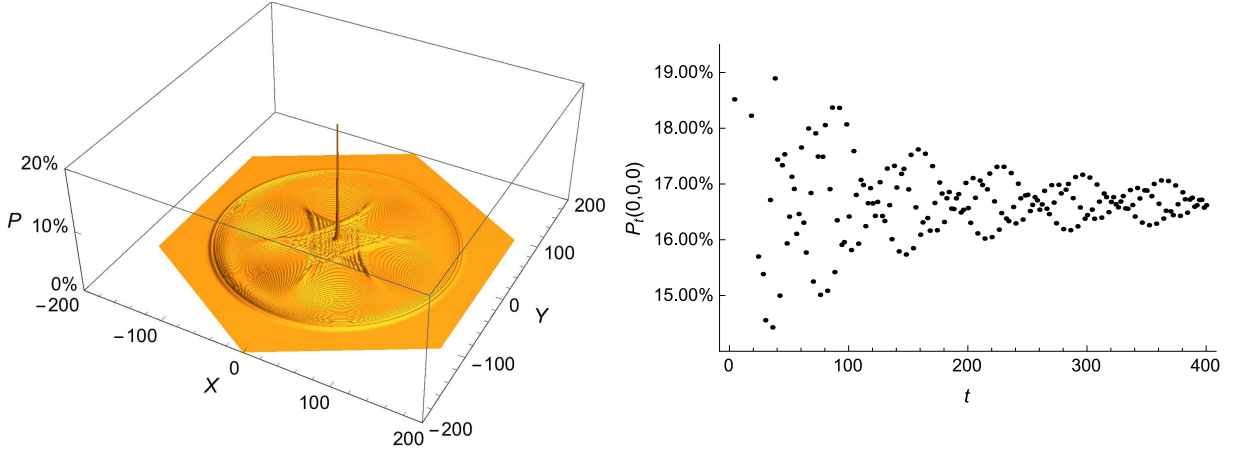}
\caption{\label{H(1,0,0)}(Color online) Localization exists when $C=H$ and $D=H^\dagger$ with the initial state $\alpha=1$ and $\beta=\gamma=0$.}
\end{figure*}
Specially if we additionally require all the eigenvalues of $C$ to be $\pm 1$, then we will get the sufficient condition for localization when the coin operator is independent of positions.
\section{\label{4DCondition}Four-state quantum walks on a honeycomb network}
\subsection{Definition and transformation matrices}
In this section, we consider another modified model in which the dimension of the coin space is enlarged to four. The additional fourth component represents not hopping, which means the walker is allowed to stay at its current position without moving. As a result, the shift operator now reads
\begin{equation*}
S=\ket{4}\bra{4}\otimes\mathbb{I}_p+\sum_{j=1}^3\sum_{s,\vec{r}}{\ket{j}\bra{j}\otimes\ketp{\vec{r}-(-1)^s\vec{v}_j,s\oplus1}\bra{\vec{r},s}},
\end{equation*}
and the revised  recursive relation is given as
$$\ket{\psi_{t+1}(\vec{r},s)}=C_4\ket{\psi_t(\vec{r},s)}+\sum_{j=1}^3C_j\ket{\psi_{t}(\vec{r}-(-1)^s\vec{v}_{j},s\oplus1)},$$
where the coin operator $C$ now should be the four-dimensional Grover's operator G~\cite{Grover1996}
\begin{equation}
G=\frac{1}{2}\begin{bmatrix}
-1 & 1 & 1 &1\\
1 & -1 & 1 &1\\
1 & 1 & -1 &1\\
1 & 1 & 1 &-1\\
\end{bmatrix}.
\end{equation}
The fourth coin basis state $\ket{4}$ stands for ``not moving''. Similarly, the recursive relations in Fourier space Eq.~\eqref{recursiveRelationsInKLSpace} also change to
\begin{eqnarray*}
\ket{\tilde\psi_{t+1}(k,l,0)}&=&C_4\ket{\psi_t(k,l,0)}+M'C\ket{\tilde\psi_t(k,l,1)},\\
\ket{\tilde\psi_{t+1}(k,l,1)}&=&C_4\ket{\psi_t(k,l,1)}+M'^\dagger C\ket{\tilde\psi_t(k,l,0)},
\end{eqnarray*}
with $M'=\text{Diag}\{1,e^{-ik},e^{-il},0\}$, which can be further expressed as
$$\begin{bmatrix}
\ket{\tilde\psi_{t+1}(k,l,0)}\\
\ket{\tilde\psi_{t+1}(k,l,1)}
\end{bmatrix}=
\begin{bmatrix}
C_4 & M'C\\
M'^\dagger C^\dagger & C_4
\end{bmatrix}
\begin{bmatrix}
\ket{\tilde\psi_{t}(k,l,0)}\\
\ket{\tilde\psi_{t}(k,l,1)}
\end{bmatrix}.$$
Therefore,
\begin{equation}
\begin{bmatrix}
\ket{\tilde\psi_{t}(k,l,0)}\\
\ket{\tilde\psi_{t}(k,l,1)}
\end{bmatrix}=(\tilde U')^t
\begin{bmatrix}
\ket{\tilde\psi_{0}(k,l,0)}\\
\ket{\tilde\psi_{0}(k,l,1)}
\end{bmatrix},
\end{equation}
where $\tilde U'=\begin{bmatrix}
C_4 & M'C\\
M'C^\dagger & C_4
\end{bmatrix}$ is an $8 \times 8$ matrix. So we start to examine the properties of the evolution operator $\tilde U'$.\par
The eigenvalues of $\tilde U'$ are $1$, $-1$, $e^{\pm i\theta_1}$ and $e^{\pm i\theta_2}$ where $\theta_1$ and $\theta_2$ are functions of $k$ and $l$. According to Lemma~\ref{th1}, when the time $t$ is large enough, we only need to pay attention to eigenvalues $\pm 1$ and their corresponding eigenvectors. The eigenvectors are
\begin{equation*}
\begin{split}
\ket{\phi'_1}=&[e^{i l}-e^{i k},1-e^{i l},-1+e^{i k},0,e^{i k}-e^{i l},\\
&e^{i k} (-1+e^{i l}),-e^{i l}(-1+e^{i k}),0]^{\mathrm T}
\end{split}
\end{equation*}
corresponding to $1$ and
\begin{eqnarray*}
\ket{\phi'_2}&=&[-1,0,0,1,-1,0,0,1]^{\mathrm T}\\
\ket{\phi'_3}&=&[0,e^{-i k},0,-e^{-i k},0,1,0,-1]^{\mathrm T}\\
\ket{\phi'_4}&=&[0,0,e^{-i l},-e^{-i l},0,0,1,-1]^{\mathrm T}
\end{eqnarray*}
corresponding to $-1$. \par
Now, if the walker still starts from the position $(0,0,0)$ with an initial coin state $[\alpha,\beta,\gamma,\mu]^{\mathrm T}$, the initial state of the overall system reads
$$\ket{\Psi_0}=[\alpha,\beta,\gamma,\mu]^{\mathrm T}\otimes\ketp{0,0,0}$$
and in Fourier space
$$\ket{\tilde\psi_{0}(k,l,s)}=\delta_{s,0}[\alpha,\beta,\gamma,\mu]^{\mathrm T}, \forall k,l \in (-\pi,\pi].$$
Since $\tilde U$ has two highly degenerate eigenvalues $\pm 1$, we denote
\begin{eqnarray}
\label{mathcalF+}\mathcal{F}_+&=&\frac1{(2\pi)^2}\int_{V_2}\dif k\dif l \ket{\phi_1}\bra{\phi_1},\\
\label{mathcalF-}\mathcal{F}_-&=&\frac1{(2\pi)^2}\sum_{j=2}^{4}\int_{V_2}\dif k\dif l \ket{\phi_j}\bra{\phi_j},
\end{eqnarray}
where vectors $\ket{\phi_j}$ are orthogonalized and normalized forms of $\ket{\phi'_j}$.
Hence, analogous to Eq.~\eqref{psyinfty(0,0,0)}, for a large $t$ we have
\begin{equation}
\begin{bmatrix}
\ket{\psi_{t}(0,0,0)}\\
\ket{\psi_{t}(0,0,1)}
\end{bmatrix}\sim (\mathcal{F}_+ +(-1)^t\mathcal{F}_-)
\begin{bmatrix}
\ket{\tilde\psi_{0}(k,l,0)}\\
0
\end{bmatrix},
\end{equation}
or
\begin{equation}\label{psyt(0,0,0)4}
\ket{\psi_{t}(0,0,0)}\sim (F_+ +(-1)^tF_-)\ket{\tilde\psi_{0}(k,l,0)},
\end{equation}
where the elements of $4 \times 4$ matrices $F_\pm$ are
$$(F_\pm)_{ij}=(\mathcal{F}_\pm)_{ij}, 1\leqslant i,j\leqslant4.$$\par
Different from the three-state walk, the extra freedom causes the probability $P_t(0,0,0)$ nonzero when $t$ is odd. However, Eq.~\eqref{psyt(0,0,0)4} implies the transformation matrices for odd and even $t$ are different, due to the negative eigenvalue $-1$, i.e.
\begin{eqnarray}
P_{\infty,e}(0,0,0)&=& |F_{e}\ket{\tilde\psi_{0}(k,l,0)}|^2\\
P_{\infty,o}(0,0,0)&=& |F_{o}\ket{\tilde\psi_{0}(k,l,0)}|^2
\end{eqnarray}
where $F_{e}=F_++F_-$ and $F_{o}=F_+-F_-$. Subscripts `e' and `o' stand for `even' and `odd' respectively.\par
On account of the complexity of Eq.~\eqref{mathcalF-}, we only get a numerical result of $F_-$:
\begin{equation*}
F_-\approx\begin{bmatrix}
 0.320303 & -0.070303 & -0.070303 & -0.179697 \\
 -0.070303 & 0.320303 & -0.070303 & -0.179697 \\
 -0.070303 & -0.070303 & 0.320303 & -0.179697 \\
 -0.179697 & -0.179697 & -0.179697 & 0.539090 \\
\end{bmatrix},
\end{equation*}
while the accurate form of $F_+$ is obtained
\begin{equation}
F_+=\dfrac{1}{12}
\begin{bmatrix}
2 & -1 & -1 & 0\\
-1 & 2 & -1 & 0\\
-1 & -1 & 2 & 0\\
0 & 0 & 0 & 0\\
\end{bmatrix}.
\end{equation}\par
As an example, with the initial coin state set to $\frac{1}{\sqrt{2}}[1,1,0,0]^{\mathrm{T}}$, the value of $P_t(0,0,0)$ would converge to two values
\begin{eqnarray}
P_{\infty,e}&=&|F_{e}\frac{1}{\sqrt{2}}[1,1,0,0]^{\mathrm{T}}|^2\approx 0.222901\\
P_{\infty,o}&=&|F_{o}\frac{1}{\sqrt{2}}[1,1,0,0]^{\mathrm{T}}|^2\approx 0.092699
\end{eqnarray}
according to the derivations above. To confirm the results, numerical simulations are performed, illustrated in Fig.~\eqref{G41000}.
\begin{figure}
\includegraphics[width=0.45\textwidth]{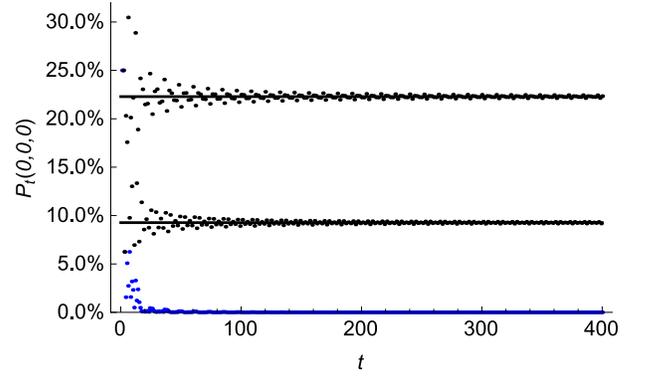}
\caption{\label{G41000}(Color online) Simulation results for four-state quantum walks with initial state $\frac{1}{\sqrt{2}}[1,1,0,0]^{\mathrm{T}}$ (black) and $\frac{1}{2}[1,1,1,1]^{\mathrm{T}}$ (blue).}
\end{figure}
The data points do distribute in two branches and get close to our analytical predictions (the two horizontal lines) as the step number $t$ gets large.
\subsection{Maximal, minimal and average probability of localization}
Once we have obtained transformation matrices, the properties of this quantum walk are easily analyzed. The eigenvalues of $F_{e}$ are $0.718787$, $0.640606$, $0.640606$, and zero. Those of $F_{o}$ are $-0.718787$, $-0.140606$, $-0.140606$ and zero, intimating that $P_o$ and $P_e$ have equal maximal values, about $|\pm 0.718787|^2\approx 0.516655$, as well as the same minimum zero. The eigenvectors of $F_{o}$ and $F_{e}$ corresponding to zero are both $\frac{1}{2}[1,1,1,1]^{\mathrm T}$. So the Grover state is also the sole initial coin state that will delocalize the particle in four-state walks, which is shown in Fig.~\ref{G41000}.\par
What is more interesting is that the eigenvector of $F_{e}$ corresponding to 0.718787 is also the same as that of $F_{o}$ corresponding to $-0.718787$, whose numerical value reads
\begin{equation*}
\begin{split}
\ket{\varphi_{max}}\approx &[-0.288675, -0.288675, -0.288675, 0.866025]^{\mathrm{T}}\\
\approx &\frac{1}{2\sqrt{3}}[-1,-1,-1,3]^{\mathrm{T}}.
\end{split}
\end{equation*}
This suggests if the coin state is initialized with $\ket{\varphi_{max}}$, $P_{e}$ and $P_{o}$ will reach the maximal value simultaneously. The numerical results in Fig.~\ref{max4} do show that the probabilities of localization for even and odd times converge to a single limit, just as we have calculated. However, the ket $\ket{\Psi_t}$ still oscillates, for $F_{o}\ket{\varphi_{max}}$ is different from $F_{e}\ket{\varphi_{max}}$.\par
\begin{figure}
\includegraphics[width=0.45\textwidth]{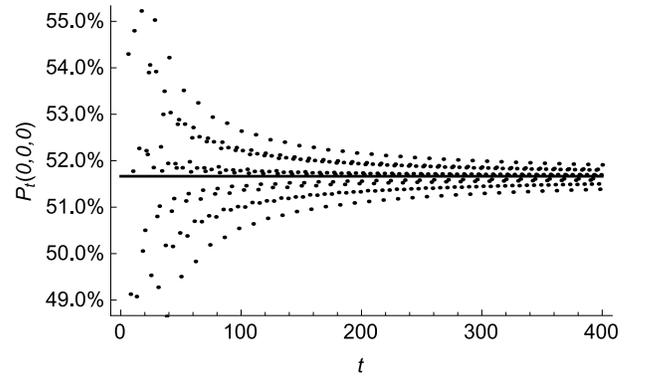}
\caption{\label{max4} When the initial coin state is $\alpha=\beta=\gamma=-\frac{1}{2\sqrt{3}}$ and $\mu=\frac{\sqrt{3}}{2}$, the values of $P_{e}$ and $P_{o}$ converge to the same limit, about 0.516655.}
\end{figure}
The maximum of $P_{\infty}$ here is around twice as much as that for the previous three-state coin model, indicating the localization is much stronger. We consider the average probability again to make a more appropriate comparison. In this case, arbitrary unitary operators $U_4\in$ SU(4) and the Haar measure $\dif \Omega_4$~\cite{Tilma2002Generalized} of SU(4) are involved:
\begin{eqnarray}
\overline{P_{e}}&=&\frac{1}{\Omega_4}{\int{|F_{e}U_4\ket{4}|^2\dif \Omega_4}}\approx 0.334352,\\
\overline{P_{o}}&=&\frac{1}{\Omega_4}{\int{|F_{o}U_4\ket{4}|^2\dif \Omega_4}}\approx 0.139049.
\end{eqnarray}
$\overline{P_{e}}$ is much larger than $\overline{P}=1/6$ in Sec.~\ref{average localization probability} and $\overline{P_{o}}$ is only slightly smaller than $\overline{P}$. In fact the probability $P_t(0, 0, 0)$ is always zero in the previous model when t is odd. Thus the localization in four-state quantum walks is much stronger than that in the previous model due to the additional fourth coin state of not hopping.
\section{\label{Comparison}Comparisons with previous works}
Now we compare our results of quantum walks on a honeycomb network to other kinds of quantum walks studied in the literature.\par
The first is work~\cite{PhysRevE.72.056112} by Inui \textsl{et al} about a particle walking on a line under the control of Grover's operator. Besides walking left and right, the particle has an additional freedom to stay at its current position. So the coin space is also three-dimensional. For the situation considered by Inui \textsl{et al}, we derive the corresponding transformation matrix $F'$
\begin{equation}
F'=\frac{1}{\sqrt{6}}\left[
\begin{array}{ccc}
 1 & \sqrt{6}-2 & 2\sqrt{6}-5 \\
 \sqrt{6}-2 & \sqrt{6}-2 & \sqrt{6}-2 \\
 2\sqrt{6}-5 & \sqrt{6}-2 & 1 \\
\end{array}
\right].
\end{equation}
Likewise, we calculate the average probability of this walk and get the result $\overline{P'}\approx 0.1684$, which is slightly larger than $1/6$ for the three-state Grover walk on a honeycomb network. However, we notice that the position space here is one dimensional but the honeycomb network is a 2D lattice. Meanwhile, in our case the walker does not have the extra freedom of ``stay". If the walker on a honeycomb network is also allowed not to hop, just as we calculated in Sec.~\ref{4DCondition}, the average probability will get much larger. Thus we can say the degree of localization is relatively high in the Grover walk on a honeycomb network. The particle can be found at its starting position with a good chance. \par
On the other hand, we can find out the maximal value of $P'_{\infty}$ in this 1D three-state walk by eigendecomposition of $F'$. The eigenvalues of $F'$ are zero, $\sqrt{6}-2$ and $3-\sqrt{6}$ and the corresponding eigenvectors are
\begin{equation*}
\frac{1}{\sqrt{6}}\begin{bmatrix}1\\-2\\1\end{bmatrix}, \frac{1}{\sqrt 2}\begin{bmatrix}-1\\0\\1\end{bmatrix} \text{ and } \frac{1}{\sqrt{3}}\begin{bmatrix}1\\1\\1\end{bmatrix}.
\end{equation*}
Hence the initial coin state $\frac{1}{\sqrt{6}}[1,-2,1]^{\mathrm T}$ will delocalize the walker, which agrees with the result in Ref.~\cite{PhysRevE.72.056112}. However, the maximum of $P'_{\infty}$ should be $|3-\sqrt{6}|^2\approx 0.303$ corresponding to the initial coin state $\frac{1}{\sqrt{3}}[1,1,1]^{\mathrm T}$, instead of the value $0.202$ claimed in Ref.~\cite{PhysRevE.72.056112}.  Actually, one can also get the maximal value 0.303 by submitting $\alpha=\beta=\gamma=1/\sqrt{3}$ into Eq. (8) of Ref.~\cite{PhysRevE.72.056112}. It is obvious that our mathematical method with the transformation matrix $F$ is very convenient in studying localization phenomena in quantum walks.\par
Apart from the localization probability, another remarkable difference is that in 1D three-state walks, the limit state converges to a static wave function, while on a honeycomb network, the limit ket oscillates between two values. Furthermore, the oscillation in four-state walks is not so palpable as that of three-state walks on the honeycomb network, since it results from the negative eigenvalue $-1$ of $\tilde U'$, the evolution operator in Fourier space.\par
Koll\'ar \textsl{et al}. in their previous work~\cite{PhysRevA.82.012303} investigated another model where the particle walks on a triangular lattice with a three-dimensional coin operator, similar to three-state quantum walks in this paper. The authors found when the coin operator is set to Grover's operator, the walker still has a rapidly decaying probability to be localized at the origin. Analyses suggest that coin operators leading to localization will transform quantum walks on a 2D plane to quantum walks on a quasi-one-dimensional line. In contrast, Grover walks on a honeycomb network undoubtedly allow the appearance of localization, and we have also found many more coin operators leading to walker's localization in the previous section. Besides, by comparing FIG.~\ref{H(1,0,0)} with FIG.~\ref{initialState100}, we can see that walker under the control of $H$ displays different probability distribution from $G$. There are nontrivial phenomena present in these models. So we can see quantum walks on a honeycomb network do have more different features.\par

The works~\cite{Higuchi20144197} by Y. Higuchi \textsl{et al}. and~\cite{machida2015limit} by Machida also analyzed the localization in quantum walks on a hexagonal lattice. The coin operators are Grover's operator $G$ and $C'$, respectively, in their articles, where
\begin{equation}
C^\prime=\frac{1}{2}
\begin{bmatrix}
-1-\cos\epsilon & \sqrt{2}\sin\epsilon & 1-\cos\epsilon \\
\sqrt{2}\sin\epsilon & 2\cos\epsilon & \sqrt{2}\sin\epsilon \\
1-\cos\epsilon & \sqrt{2}\sin\epsilon & -1-\cos\epsilon
\end{bmatrix}
\end{equation}
and the parameter $\epsilon\in [0,2\pi)$. In fact when $\epsilon$ is set to $\arcsin(-1/3)$, $C^\prime$ equals Grover's operator. Their results of long-time limit of the localization probability are congruent with ours. However, the method of oblique coordinate system used to label the vertices in our article is more intuitive and concise. We have conducted more detailed analysis such as the average probability and the dependence on the coin operator. Moreover, we can see that the eigenvectors of $C'$ are
\begin{equation}
\begin{bmatrix}
-1\\
0\\
1
\end{bmatrix} \text{, }
\begin{bmatrix}
\cot\epsilon+\csc\epsilon\\
-\sqrt{2}\\
\cot\epsilon+\csc\epsilon
\end{bmatrix} \text{ and }
\begin{bmatrix}
1-\cos\epsilon\\
\sqrt{2}\sin\epsilon\\
1-\cos\epsilon
\end{bmatrix},
\end{equation}
and $C^\prime={C^\prime}^\dagger$. So $C'$ satisfies the sufficient condition in Sec.~\ref{Dependence on Coin Operator}. Thus 1 is one of the the time-evolution operator's eigenvalue and localization is probable to occur in Machida's model according to the more general result in Sec.~\ref{Dependence on Coin Operator}. \par

As a by-product, we numerically estimate the spreading speed of the walker mentioned in their papers. Reasonably, we use the average radius $\bar r$ defined as
\begin{equation}
\bar r=\sum_{x,y,s}{r_{x,y,s}P_t(x,y,s)}
\end{equation}
from the starting point to describe how far the walker has reached, where
$$r_{x,y,s}=\sqrt{\frac{3}{4}(x+\frac{s}{3})^2+(y+\frac{x+s}{2})^2}$$ is the distance between position $(x,y,s)$ and $(0,0,0)$. The results for the circumstances of Fig.~\ref{initialState100} and Fig.~\ref{H(1,0,0)} are plotted in Fig.~\ref{rbar}, where the data points are distributed in two lines very well, i.e., the walker is linearly spreading on the network and the radius of the ring in Fig.~\ref{initialState100} and in Fig.~\ref{H(1,0,0)} are proportional to time, which agrees with Refs.~\cite{machida2015limit} and~\cite{Higuchi20144197}. Additionally, the difference between the slopes of the two lines indicates the spreading speed is related to the coin operator.
\begin{figure}
\includegraphics[width=0.4\textwidth]{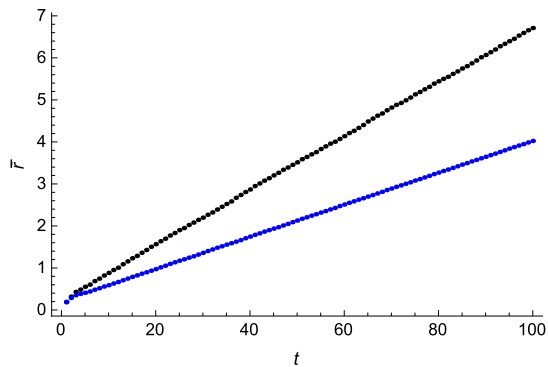}
\caption{\label{rbar}(Color online) Relationship between average position $\bar{r}$ and time $t$. Black and blue dots stand for Fig.~\ref{initialState100} and Fig.~\ref{H(1,0,0)}, respectively.}
\end{figure}

\section{\label{summary}Summary}
In this paper, we have established a precise mathematical description of quantum walks on a honeycomb network and conducted a comprehensive and rigorous study on the localization effect. We first set the coin operator to Grover's operator. With discrete Fourier transformation, we analytically obtained the system's state vector. The localization probability in the limit of large number of steps is connected to the initial coin state through a concise transformation matrix $F$, which completely contains all the information about the walker's limit behavior. We then analytically showed localization and delocalization. The results were all supported by numerical simulations. The average probability of localization's occurrence is $1/6$ if the system's initial state is randomly generated. Other coin operators were also discussed and we derived a sufficient condition for localization. Further, we extended the coin space to four and conducted parallel calculations, finding an oscillating limit state and deriving a larger probability. Some other models in literature were reviewed. Basing on the average probabilities, we concluded that the trapping effect of a honeycomb network is relatively strong. Our mathematical method with the transformation matrix is very convenient to study localization in quantum walks. We numerically showed the walker's linear spreading property as well.
\par

\iffalse
From discussions in the paper, we also see that the phase plays an important role in quantum walks and that is the most salient difference between classical and quantum dynamics. Phase factors affect the degree of localization through the eigenvalues of operator $\tilde U$, the initial state and eigenvectors of coin operator. In some previous experiments~\cite{PhysRevA.61.013410}, quantum walks were implemented with only classical optics devices utilized, since classical electromagnetic wave can well imitate walker's de Broglie wave. We therefore conclude that adapting phase differences and controlling interference relationship play essential roles in the appearance and disappearance of localization.\par
Quantum walks are often used to develop quantum algorithms. Taking account of the broad choice of coin operators and vast studies in existing hexagonal 2D materials like graphene and h-BN, the localization property of quantum walks on a honeycomb network suggests its potential application in designing and implementing new algorithms on these materials. We also notice that the simple Grover walk model can be extended by much more approaches, such as additional gates or vertex-dependent coin operators. Thus we may find higher degree of localization or other novel phenomena. There are plenty of open and fascinating questions.  \par
\fi

\begin{acknowledgments}
We wish to thank Sisi Zhou and Ru Cao for their assistance and Yafang Xu, Xingfei Zhou, Hongyi Zhang and Professor Guojun Jin for their advice. This work is supported by the National Natural Science Foundation of China under Grants No. 11275181 and No. 11475084,
and the Fundamental Research Funds for the Central Universities under Grant No. 20620140531.
\end{acknowledgments}

\appendix
\section{\label{DTFT}discrete Fourier transformation}
Suppose $X(\vec m)$ is a function of $\vec m$, an n-dimensional vector whose components $m_j\in \mathbb{Z}$. Its discrete Fourier transformation is defined as
\begin{equation}
 {\tilde X}(\vec k)=\sum_{m_j\in \mathbb{Z}} X(\vec m)e^{-i\vec k\cdot \vec m},
 \end{equation}
 where $\vec k$ is an n-dimensional wave vector with components $k_j\in (-\pi,\pi]$. The inverse transformation is:
 \begin{equation}
 X(\vec m)=\dfrac{1}{(2\pi)^n}\int_{V_n}{{\tilde X}(\vec k )e^{i\vec k\cdot \vec m}\dif \vec k}\label{DTFT-R}.
 \end{equation}
The integration region $V_n$ in Eq.~\eqref{DTFT-R} is $k_j\in(-\pi,\pi]$, representing an n-dimensional hypercube whose sides equal $2\pi$ in the Fourier space.

\section{Proof for Lemma~\ref{th1}}
Mathematical notations are as follows:
\begin{itemize}
\item $L^1([a,b])$: the set of Lebesgue integrable real functions defined on the interval $[a,b]$.
\item $C^2_B(A)$: the set of bounded real functions defined on set A with continuous first and second derivatives.
\item $\mathfrak{M} ([a,b])$: the set of Lebesgue measurable real functions defined on the interval $[a,b]$.
\item $P(x)$, a.e. $x\in A$: the proposition $P(x)$ does not hold only for $x\in A_1\subset A$ and the Lebesgue measure of $A_1$ is zero.
\end{itemize}

\begin{lemma}[Riemann-Lebesgue lemma]\label{RL lemma}
If $f(x) \in L^1([a,b])$, then
 \begin{equation*}
   \lim\limits_{t \rightarrow \infty} \int_a ^b f(x) \cos{\left(tx \right)} \, \dif{x} = \lim\limits_{t \rightarrow \infty} \int_a ^b f(x)\sin{\left(tx \right)} \, \dif{x} = 0.
 \end{equation*}
\end{lemma}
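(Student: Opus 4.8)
The plan is to prove the statement by the classical density argument, reducing the general $L^1$ case to the trivial case of indicators of subintervals. \textbf{Step 1 (the elementary case).} For $f=\mathbf{1}_{[\alpha,\beta]}$ with $[\alpha,\beta]\subseteq[a,b]$ one computes directly $\int_a^b \mathbf{1}_{[\alpha,\beta]}(x)\cos(tx)\,\dif x=(\sin t\beta-\sin t\alpha)/t$, which is $O(1/t)$, and similarly for the sine integral; by linearity of the integral the conclusion then holds for every step function $g=\sum_k c_k\mathbf{1}_{I_k}$, a finite linear combination of indicators of subintervals $I_k\subseteq[a,b]$.

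\textbf{Step 2 (density).} Next I would invoke the fact that step functions are dense in $L^1([a,b])$: given $f\in L^1([a,b])$ and $\varepsilon>0$, there is a step function $g$ with $\int_a^b|f-g|\,\dif x<\varepsilon$. This is a standard consequence of the construction of the Lebesgue integral together with the regularity of Lebesgue measure, which lets one approximate any measurable subset of $[a,b]$ by a finite union of intervals up to arbitrarily small measure; approximating simple functions this way, and then approximating $f$ by simple functions, yields the claim.

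\textbf{Step 3 (combine).} For every $t$ the triangle inequality gives $\bigl|\int_a^b f(x)\cos(tx)\,\dif x\bigr|\le \int_a^b|f-g|\,\dif x+\bigl|\int_a^b g(x)\cos(tx)\,\dif x\bigr|<\varepsilon+\bigl|\int_a^b g(x)\cos(tx)\,\dif x\bigr|$. Letting $t\to\infty$ and using Step 1, $\limsup_{t\to\infty}\bigl|\int_a^b f(x)\cos(tx)\,\dif x\bigr|\le\varepsilon$; since $\varepsilon>0$ was arbitrary the limit is $0$. The argument for $\sin(tx)$ is word-for-word the same.

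\textbf{The main obstacle.} The only step with real content is the density statement of Step 2; everything else is a one-line computation or the triangle inequality. If one prefers to avoid quoting that density fact, an equivalent route is to first reduce to continuous $f$ (which are uniformly, hence $L^1$-, approximable by $C^1$ functions, e.g.\ via mollification or piecewise-linear interpolation) and then integrate by parts, $\int_a^b f(x)\cos(tx)\,\dif x=\bigl[f(x)\tfrac{\sin tx}{t}\bigr]_a^b-\tfrac1t\int_a^b f'(x)\sin(tx)\,\dif x=O(1/t)$. Either way the essential point is the same: approximate $f$ in $L^1$ by a class of functions for which the oscillatory integral is manifestly small for large $t$.
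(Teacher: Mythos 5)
Your proof is correct, but note that the paper does not actually supply a proof of this statement: the Riemann--Lebesgue lemma is quoted there as a known classical result and is only used as an ingredient in the proofs of the subsequent lemmas. Your argument is the standard one and every step is sound: the direct computation $\int_\alpha^\beta\cos(tx)\,\mathrm{d}x=(\sin t\beta-\sin t\alpha)/t=O(1/t)$ for indicators of intervals, the extension by linearity to step functions, the density of step functions in $L^1([a,b])$ (via regularity of Lebesgue measure and approximation by simple functions), and the triangle-inequality/$\limsup$ argument showing $\limsup_{t\to\infty}\bigl|\int_a^b f(x)\cos(tx)\,\mathrm{d}x\bigr|\le\varepsilon$ for every $\varepsilon>0$. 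It is worth pointing out that this approximation-by-staircase-functions device is precisely the one the paper itself deploys in its proof of Lemma 4, where an integral $\int_a^b g(x)\cos(\varphi(x)t)\,\mathrm{d}x$ with $g\in L^1$ is reduced to the case of a staircase $g_1$ with $\int_a^b|g-g_1|\,\mathrm{d}x<\varepsilon$; so your method is exactly consistent with the paper's toolkit, merely applied one level further down to establish the base lemma the paper takes for granted.
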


\begin{lemma}\label{lemma3}
If a countably piecewise monotonic function $\varphi(x)$ definied on $[a,b]$ satisfies that in each monotonic interval $(a_n,b_n)$, $\varphi(x) \in C^2_B((a_n,b_n))$ and the set $\{x\, |\, \varphi'(x) = 0\}$ is countable, then
 \begin{equation}
   \lim\limits_{t \rightarrow \infty} \int_a ^b \cos{\left(  \varphi (x) t \right)} \, \dif{x} = \lim\limits_{t \rightarrow \infty} \int_a ^b \sin{\left(  \varphi (x) t \right)} \, \dif{x} = 0.
 \end{equation}
\end{lemma}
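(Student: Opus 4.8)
\textbf{Proof proposal for Lemma~\ref{lemma3}.}

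The plan is to reduce the oscillatory integral over $[a,b]$ to a countable sum of integrals over the monotonic intervals $(a_n,b_n)$, and on each such interval perform a change of variables that turns $\cos(\varphi(x)t)$ and $\sin(\varphi(x)t)$ into the trigonometric kernels of the Riemann–Lebesgue lemma (Lemma~\ref{RL lemma}). First I would invoke countable additivity: since $\varphi$ is countably piecewise monotonic, $[a,b]=\bigcup_n \overline{(a_n,b_n)}$ up to a countable (hence null) set of endpoints, so it suffices to show $\lim_{t\to\infty}\int_{a_n}^{b_n}\cos(\varphi(x)t)\,\dif x=0$ for each $n$, and then pass the limit through the sum. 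Passing the limit through requires a dominated-convergence-type justification for the series; I would get this from the trivial bound $\left|\int_{a_n}^{b_n}\cos(\varphi(x)t)\,\dif x\right|\le b_n-a_n$ together with $\sum_n (b_n-a_n)=b-a<\infty$, plus an $\varepsilon/2$ argument (bound the tail of the series uniformly in $t$, then send $t\to\infty$ on the finite head).

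On a single monotonic interval $(a_n,b_n)$, $\varphi$ is strictly monotonic except possibly on the countable critical set $\{x:\varphi'(x)=0\}$, which has measure zero, so $\varphi$ restricted there is almost-everywhere a $C^1$ diffeomorphism onto its image $(\varphi(a_n^+),\varphi(b_n^-))$ (or the reversed interval). I would substitute $u=\varphi(x)$, so that $\dif x = \dfrac{\dif u}{\varphi'(\varphi^{-1}(u))}$, and the integral becomes $\int \cos(ut)\,w(u)\,\dif u$ with weight $w(u)=1/\varphi'(\varphi^{-1}(u))$. If I can show $w\in L^1$ of the $u$-interval, then Lemma~\ref{RL lemma} applied in the variable $u$ finishes that piece, and likewise for $\sin$. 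But $w\in L^1$ is immediate: by the change-of-variables theorem itself, $\int |w(u)|\,\dif u = \int_{a_n}^{b_n} 1\,\dif x = b_n-a_n<\infty$, since the map is a.e.\ a bijection with the stated Jacobian.

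The main obstacle — and the reason the $C^2_B$ and countable-critical-set hypotheses are imposed — is the behavior near points where $\varphi'=0$: there the weight $w(u)$ blows up, and one must be sure these singularities are integrable and do not obstruct the substitution. The clean way around this is to note that I never actually need $w$ to be defined pointwise everywhere; the change-of-variables identity for monotone functions gives $\int_{a_n}^{b_n} F(\varphi(x))\,\dif x = \int F(u)\,\dif\mu_n(u)$ where $\mu_n$ is the pushforward of Lebesgue measure, and $\mu_n$ is absolutely continuous with density $w\in L^1$ precisely because the critical set is countable (so contributes no mass in $u$) and $\varphi\in C^2_B$ makes the monotone inverse absolutely continuous off that set. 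One then applies Riemann–Lebesgue to $F=\cos$, $F=\sin$ against the $L^1$ density $w$. I would isolate this regularity bookkeeping as the technical heart of the argument and treat the summation-over-$n$ and endpoint-null-set steps as routine. This lemma then feeds directly into Lemma~\ref{th1} by fixing $x$, applying it in the $y$-variable to $\varphi_x$, and integrating the (bounded-by-$d-c$, dominated) result over $x$ via Fubini and dominated convergence.
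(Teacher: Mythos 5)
Your argument is correct, and on the key single-interval step it takes a genuinely different route from the paper. Both proofs share the same skeleton (decompose $[a,b]$ into the monotonic pieces, substitute $u=\varphi(x)$, then justify interchanging the limit with the countable sum via the bound $|{\int_{a_n}^{b_n}}|\le b_n-a_n$), but they diverge in how the oscillatory integral on a single piece is killed. The paper integrates by parts after the substitution: it first trims $\epsilon$-neighborhoods of the endpoints so that $\varphi'$ is nonzero at the cut points, obtains a boundary term of size $O(1/t)$, and then applies the Riemann--Lebesgue lemma to the remaining weight $\varphi''/(\varphi')^3$, whose integrability is checked by transforming back to $x$. This uses the full $C^2_B$ hypothesis and yields a quantitative $1/t$ decay on each trimmed piece, at the cost of the endpoint bookkeeping. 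You instead apply Riemann--Lebesgue directly to the weight $w(u)=1/\varphi'(\varphi^{-1}(u))$, observing that its $L^1$ norm is simply $b_n-a_n$ because $w$ is the density of the pushforward of Lebesgue measure under $\varphi$. This is shorter, avoids the $\epsilon$-cutoff entirely, and in fact only needs $\varphi\in C^1$ on each piece. The one point you rightly flag as the technical heart is the absolute continuity of that pushforward: you need that $\varphi$ does not map a positive-measure set to a null set, which is where the hypotheses do real work (off the countable critical set $\varphi$ is locally a $C^1$ diffeomorphism, hence preserves null sets in both directions, and the critical set itself is null and has null image). Without some such hypothesis a Cantor-function-type $\varphi$ would defeat the argument, so it is worth spelling that sentence out rather than citing ``the change-of-variables theorem itself,'' which as first stated reads slightly circular; your third paragraph already supplies the correct non-circular justification. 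Your handling of the summation over $n$ (tail bound uniform in $t$, then limit on the finite head) is, if anything, more careful than the paper's appeal to uniform convergence, and your final remark on how the lemma feeds into Lemma~\ref{th1} matches the paper's use of dominated convergence in the $x$-variable.
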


\begin{proof}
We prove it in the case of cosine. The domain of $\varphi(x)$ can be divided into countable intervals, i.e.,
\begin{equation*}
   [a,b)=\bigcup_n [a_n,b_b)\text{ and }[a_n,b_n)\cap[a_m,b_m) = \varnothing, \text{ for } n \ne m,
\end{equation*}
and in each interval the function is monotonic. In this case, each piece is bounded and differentiable to the second order.

Consider one of those intervals $[a_n,b_n)$. First, based on the theorem of existence of inverse function, there is an inverse function of $\varphi$, called $\varphi^{-1}$. Besides, near the end points $a_n$ and $b_n$, we have
\begin{equation*}
   \lim\limits_{\epsilon \rightarrow 0} \int_{a_n} ^{a'_n} \left| \cos{\left(  \varphi (x) t \right)}\right| \, \dif{x} =\lim\limits_{\epsilon \rightarrow 0} \int_{b'_n} ^{b_n} \left| \cos{\left(  \varphi (x) t \right)}\right| \, \dif{x}  =0,
\end{equation*}
where $a'_n=a_n+\epsilon$ and $b'_n=b_n-\epsilon$. Because the set $\{x\, |\, \varphi'(x) = 0\}$ is countable,
$\varphi'(a'_n)$ or $\varphi'(b'_n)$ is not zero for almost all $\epsilon$. Now in the interval $[a'_n,b'_n)$, by making a substitution $u=\varphi(x)$ and integrating by parts, we get
\begin{widetext}
\begin{eqnarray}
   \int_{a'_n} ^{b'_n} \cos{(  \varphi (x) t )} \, \dif{x}&=&  \int_{\varphi(a'_n)} ^{\varphi(b'_n)} \frac{\cos{(ut)}}{\varphi'[\varphi^{-1}(u)]} \, \dif{u} = \frac{1}{t} \frac{\sin{(ut)}}{\varphi'[\varphi^{-1}(u)]} \bigg|_{\varphi(a'_n)}^{\varphi(b'_n)} + \frac{1}{t} \int_{\varphi(a'_n)} ^{\varphi(b'_n)}  \sin{(ut)} \frac{\varphi''[\varphi^{-1}(u)]}{\{\varphi'[\varphi^{-1}(u)]\}^3} \, \dif{u}\nonumber \\
   \label{IBP}&=& \frac{1}{t} \left(\frac{\sin{\left(\varphi(b'_n)t\right)}}{\varphi'(b'_n)}-\frac{\sin{\left(\varphi(a'_n)t\right)}}{\varphi'(a'_n)}\right)
   + \frac{1}{t} \int_{\varphi(a'_n)} ^{\varphi(b'_n)}  \sin{\left(  u t \right)} \frac{\varphi''\left[\varphi^{-1}(u)\right]}{\left\{\varphi'\left[\varphi^{-1}(u)\right]\right\}^3} \, \dif{u}.
\end{eqnarray}
\end{widetext}
Since $|\sin{\left(  u t \right)}|\leq 1$ and $\varphi'(a'_n) \ne 0,\varphi'(b'_n) \ne 0$, the first term tends zero when $t \rightarrow \infty$. For the second term, making inverse substitution again,
\begin{align}
\begin{split}
\int_{\varphi(a'_n)} ^{\varphi(b'_n)}  \frac{\varphi''\left[\varphi^{-1}(u)\right]}{\left\{\varphi'\left[\varphi^{-1}(u)\right]\right\}^3} \, \dif{u} = \int_{a'_n} ^{b'_n} \frac{\varphi''(x)}{\left[\varphi'(x)\right]^2} \, \dif{x}
\\ =\frac{1}{\varphi'(a'_n)}-\frac{1}{\varphi'(b'_n)}.
\end{split}
\end{align}
According to Lemma \ref{RL lemma}, the second term tends to zero when $t \rightarrow \infty$.\par
Now Let us see the integral in interval $[a_n,b_n)$
\begin{equation*}
    \int_{a_n} ^{b_n} \cos{\left(  \varphi (x) t \right)} \, \dif{x}
    =(\int_{a_n} ^{a'_n} +\int_{a_n'} ^{b_n'} +\int_{b'_n} ^{b_n}) \cos{\left(  \varphi (x) t \right)} \, \dif{x}.
\end{equation*}
After letting $\epsilon \rightarrow 0$ and $t \rightarrow \infty$,
\begin{equation*}
    \lim\limits_{t \rightarrow \infty} \int_{a_n} ^{b_n} \cos{\left(  \varphi (x) t \right)} \, \dif{x}=0.
\end{equation*}
Finally, since $\lim\limits_{t \rightarrow \infty} \int_{a} ^{b} \left| \cos{\left(  \varphi (x) t \right)} \right| \, \dif{x} < b-a$ is uniformly convergent, the limit and the sum is commutable. So,
\begin{equation*}
\begin{split}
    \lim\limits_{t \rightarrow \infty} \int_{a} ^{b} \cos{\left(  \varphi (x) t \right)} \, \dif{x}
    = \lim\limits_{t \rightarrow \infty} \sum_n \int_{a_n} ^{b_n} \cos{\left(  \varphi (x) t \right)} \, \dif{x}\\
    = \sum_n \lim\limits_{t \rightarrow \infty} \int_{a_n} ^{b_n} \cos{\left(  \varphi (x) t \right)} \, \dif{x}
    =0.
\end{split}
\end{equation*}
\end{proof}

\begin{lemma}\label{lemma4}
For a countably piecewise monotonic function $\varphi(x)$ satisfying that in each monotonic interval $(a_n,b_n)$, $\varphi(x) \in C^2_B((a_n,b_n))$ and the set $\{x\, |\, \varphi'(x) = 0\}$ is countable, if $f(x) = g(x)+i h(x)$ and $g,h \in L^1([a,b])$, then
 \begin{equation}
   \lim\limits_{t \rightarrow \infty} \int_a ^b f(x) e^{i  \varphi (x) t } \, \dif{x} = 0.
 \end{equation}
\end{lemma}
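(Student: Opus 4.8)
The plan is to deduce Lemma~\ref{lemma4} from Lemma~\ref{lemma3} by two elementary reductions: splitting the complex integrand into real trigonometric pieces, and approximating the $L^1$ weight by step functions.

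First I would expand $e^{i\varphi(x)t}=\cos(\varphi(x)t)+i\sin(\varphi(x)t)$ and $f=g+ih$, so that $\int_a^b f(x)e^{i\varphi(x)t}\,\dif x$ becomes a $\mathbb{C}$-linear combination of the four real integrals $\int_a^b g\cos(\varphi t)\,\dif x$, $\int_a^b g\sin(\varphi t)\,\dif x$, $\int_a^b h\cos(\varphi t)\,\dif x$ and $\int_a^b h\sin(\varphi t)\,\dif x$. Hence it is enough to show that $\int_a^b w(x)\cos(\varphi(x)t)\,\dif x\to 0$ and $\int_a^b w(x)\sin(\varphi(x)t)\,\dif x\to 0$ as $t\to\infty$ for an arbitrary real $w\in L^1([a,b])$; the two cases are handled identically, so I will describe the cosine one.

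Next, fix $\varepsilon>0$. Step functions --- finite linear combinations of indicators of subintervals --- are dense in $L^1([a,b])$ (simple functions are dense, and each measurable set is approximated in measure by a finite union of intervals by outer regularity of Lebesgue measure), so I can pick $\sigma=\sum_{k=1}^m c_k\,\mathbf{1}_{[u_k,v_k)}$ with $\|w-\sigma\|_1<\varepsilon$. Using $|\cos|\le 1$,
\[
\left|\int_a^b w\cos(\varphi t)\,\dif x\right|\le\|w-\sigma\|_1+\sum_{k=1}^m|c_k|\left|\int_{u_k}^{v_k}\cos(\varphi(x)t)\,\dif x\right|.
\]
On each $[u_k,v_k)$ the restriction of $\varphi$ is again countably piecewise monotonic, is $C^2_B$ on each monotonic piece, and has a countable zero set for $\varphi'$ --- these properties survive intersecting the original monotonicity partition with the fixed interval $[u_k,v_k)$. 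So Lemma~\ref{lemma3} gives $\int_{u_k}^{v_k}\cos(\varphi(x)t)\,\dif x\to 0$ for each $k$, whence $\limsup_{t\to\infty}\bigl|\int_a^b w\cos(\varphi t)\,\dif x\bigr|\le\varepsilon$. Letting $\varepsilon\downarrow 0$ proves the cosine claim; the sine claim is verbatim the same, and recombining the four pieces yields Lemma~\ref{lemma4}.

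The only place that calls for a little care --- and the main, if mild, obstacle --- is checking that Lemma~\ref{lemma3} genuinely applies to $\varphi$ restricted to an arbitrary subinterval $[u_k,v_k)$: one has to observe that cutting the monotonic intervals $(a_n,b_n)$ at the endpoints $u_k,v_k$ only refines a countable partition into a countable partition and does not spoil the boundedness or the $C^2$ regularity of any piece, nor enlarge $\{\varphi'=0\}$ beyond a countable set. Everything else is the standard ``density of step functions'' packaging of a Riemann--Lebesgue-type vanishing result.
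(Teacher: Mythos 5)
Your proposal is correct and follows essentially the same route as the paper's own proof: split $f e^{i\varphi t}$ into the four real integrals involving $g,h$ against $\cos(\varphi t)$ and $\sin(\varphi t)$, approximate the $L^1$ weight by a staircase (step) function within $\varepsilon$, and invoke Lemma~\ref{lemma3} on each constituent interval. Your explicit check that the hypotheses of Lemma~\ref{lemma3} survive restriction to a subinterval is a point the paper glosses over, but it is the same argument.
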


\begin{proof}
\begin{equation}
\begin{split}\label{f(x)exp(iphi(x))}
  f(x) e^{i  \varphi (x) t }=& g(x) \cos{\left(  \varphi (x) t \right)} -h(x) \sin{\left(  \varphi (x) t \right)}\\
  +& i[g(x)\sin{\left(  \varphi (x) t \right)} + ih(x)\cos{\left(  \varphi (x) t \right)}].
\end{split}
\end{equation}
For the first term in Eq.~\eqref{f(x)exp(iphi(x))}, since $g\in L^1([a,b])$, $\forall \,\epsilon > 0, \exists$ a \emph{staircase} function $g_1(x)$ satisfying
\begin{equation*}
  \int_{a} ^{b} |g(x)-g_1(x)| \, \dif{x} < \epsilon.
\end{equation*}
Hence
\begin{eqnarray*}
  \left|\int_{a} ^{b} g(x) \cos{(\varphi(x)t)} \, \dif{x} - \int_{a} ^{b} g_1(x) \cos{(\varphi(x)t)} \, \dif{x} \right| \\\leq \int_{a} ^{b} |g(x)-g_1(x)| \, \dif{x} < \epsilon.
\end{eqnarray*}
Because $g_1$ is a staircase function, according to Lemma~\ref{lemma3},
\begin{equation*}
    \lim\limits_{t \rightarrow \infty} \int_a ^b g_1(x)\cos{\left(  \varphi (x) t \right)} \, \dif{x} = 0.
\end{equation*}
Therefore,
\begin{equation*}
    \lim\limits_{t \rightarrow \infty} \int_a ^b g(x)\cos{\left(  \varphi (x) t \right)} \, \dif{x} = 0.
\end{equation*}
The conclusions are the same for the other three terms in Eq.~\eqref{f(x)exp(iphi(x))}.
\end{proof}

\begin{lemma}[Lebesgue's dominated convergence theorem]\label{RL D}
Suppose $\{f_n (x)\}\subset \mathfrak{M} ([a,b])$ and $\lim\limits_{n \rightarrow \infty} f_n(x)=f(x)$, a.e. $x\in[a,b]$. If $f(x)$ is dominated by an integrable function $g(x)$ in the sense that $|f(x)|\leq g(x)$, then
 \begin{equation*}
   \lim\limits_{n \rightarrow \infty} \int_a ^b f_n (x)  \, \dif{x} = \int_a ^b f (x)  \, \dif{x}.
 \end{equation*}
Further, the index can be continuous, namely,
 \begin{equation*}
   \lim\limits_{t \rightarrow \infty} \int_a ^b f (x,t)  \, \dif{x} = \int_a ^b f (x)  \, \dif{x}.
 \end{equation*}
\end{lemma}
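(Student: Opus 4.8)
The plan is to deduce this from Fatou's lemma --- and hence, ultimately, from the monotone convergence theorem --- which is the standard route. Throughout I take the working hypothesis to be that each $f_n$ is measurable and $|f_n(x)|\le g(x)$ for a.e.\ $x$, with $g$ a single fixed integrable majorant; this is what the proof actually consumes, and it forces the pointwise limit $f$ to obey $|f|\le g$, so $f$ itself is integrable. The first move is to pass to the two auxiliary sequences $g+f_n$ and $g-f_n$, which are measurable and nonnegative a.e.\ --- exactly the regime in which Fatou's lemma is available.

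Next I would apply Fatou's lemma twice. On $g+f_n$ it yields, after cancelling the finite term $\int_a^b g\,\dif x$ from both sides,
\begin{equation*}
\int_a^b f\,\dif x \le \liminf_{n\to\infty}\int_a^b f_n\,\dif x ,
\end{equation*}
using $\liminf_n(g+f_n)=g+f$ a.e.\ since $f_n\to f$ a.e. On $g-f_n$ the mirror computation gives $\int_a^b f\,\dif x \ge \limsup_{n\to\infty}\int_a^b f_n\,\dif x$. Squeezing the two bounds forces $\lim_{n\to\infty}\int_a^b f_n\,\dif x$ to exist and equal $\int_a^b f\,\dif x$. For the continuous-index version I would reduce to the discrete one through the sequential characterization of a limit: it suffices that $\int_a^b f(x,t_k)\,\dif x\to\int_a^b f(x)\,\dif x$ along every sequence $t_k\to\infty$, and for any fixed such sequence the functions $f_k(x):=f(x,t_k)$ satisfy the discrete hypotheses, so the conclusion transfers verbatim.

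There is essentially no computation here; the one genuinely load-bearing ingredient is Fatou's lemma (equivalently, monotone convergence), so a fully self-contained appendix would instead have to establish that piece --- built up from simple functions --- and everything downstream is bookkeeping. The point that does require care, and where a hasty argument breaks, is the domination hypothesis: the bound $|f_n|\le g$ must hold for the entire family $\{f_n\}$ (respectively $\{f(\cdot,t)\}$), not merely for the limit $f$, precisely because it is the $f_n$, not $f$, that appear inside $\liminf_n\int$ and $\limsup_n\int$ when Fatou is invoked. Once that is granted, the rest is routine.
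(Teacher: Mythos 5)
Your proof is correct, and it is the standard one; note, however, that the paper itself offers no proof of this lemma at all --- it is simply quoted as the classical dominated convergence theorem (just as Lemma~2, the Riemann--Lebesgue lemma, is quoted without proof) and then invoked in the proof of Lemma~1. So there is no ``paper approach'' to compare against: your Fatou-based argument (apply Fatou to $g+f_n$ and $g-f_n$, cancel the finite $\int g$, and squeeze the resulting $\liminf$/$\limsup$ bounds; reduce the continuous-index case to sequences $t_k\to\infty$) is exactly what a self-contained appendix would need to supply. One point in your write-up deserves emphasis because it is a genuine improvement on the statement as printed: the paper's hypothesis only asks that the \emph{limit} $f$ be dominated by $g$, which is not enough --- the bound $|f_n|\le g$ must hold uniformly over the whole sequence (respectively over the family $f(\cdot,t)$), since it is the $f_n$ that must be sandwiched for Fatou to apply. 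You correctly strengthened the hypothesis to the form actually used; in the application within the paper this is harmless, because there the integrand is bounded by $\int_c^d|f(x,y)|\,\dif y$ uniformly in $t$, so the corrected hypothesis is satisfied.
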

Now we prove Lemma~\ref{th1}:
\begin{proof}
Set
\begin{equation*}
  I(x,t) =  \int_c ^d f(x,y) e^{i  \varphi (x,y) t } \, \dif{y}.
\end{equation*}
According to Lemma~\ref{lemma4}, one can find that
\begin{equation*}
  \lim\limits_{t \rightarrow \infty} I(x,t) = 0,\, a.e. x\in[a,b].
\end{equation*}
Besides,
\begin{equation*}
\begin{split}
  |I(x,t)| &=  \left|\int_c ^d f(x,y) e^{i  \varphi (x,y) t } \, \dif{y}\right|\\
  & \leq \int_c ^d |f(x,y)|\, \dif{y} = G(x),
\end{split}
\end{equation*}
where $G(x)$ is integrable. According to Lemma~\ref{RL D}, we get
\begin{equation*}
  \lim\limits_{t \rightarrow \infty} \int_a ^b \int_c ^d f(x,y) e^{i  \varphi (x,y) t } \, \dif{y}  \, \dif{x} = \lim\limits_{t \rightarrow \infty} \int_a ^b I(x,t)  \, \dif{x} = 0.
\end{equation*}
\end{proof}

\section{Proof for the sufficient condition}
Denoting the normalized eigenvectors of the matrix $C$ as $\ket{\varphi_m}$ and corresponding eigenvalues as $e^{i\omega_m}$ ($m=1,2,3$), we can construct a unitary matrix $P$ by arranging three eigenvectors $\ket{\varphi_m}$ together, namely, $P=[\ket{\varphi_1},\ket{\varphi_2},\ket{\varphi_3}]$. Then
\begin{eqnarray*}
P{\tilde U}P^\dagger&=&(PMP^\dagger)(PC^\dagger P^\dagger)(P M^\dagger P^\dagger) (P C P^\dagger)\\
&=&(P
\begin{bmatrix}
 1 & 0 & 0 \\
 0 & e^{-i \text{k}} & 0 \\
 0 & 0 & e^{-i \text{l}} \\
\end{bmatrix}
P^\dagger)
\begin{bmatrix}
 e^{-i\omega_1} & 0 & 0 \\
 0 & e^{-i\omega_2} & 0 \\
 0 & 0 & e^{-i\omega_3} \\
\end{bmatrix}\\
&&(P
\begin{bmatrix}
 1 & 0 & 0 \\
 0 & e^{i \text{k}} & 0 \\
 0 & 0 & e^{i \text{l}} \\
\end{bmatrix}
P^\dagger)
\begin{bmatrix}
 e^{i\omega_1} & 0 & 0 \\
 0 & e^{i\omega_2} & 0 \\
 0 & 0 & e^{i\omega_3} \\
\end{bmatrix}.
\end{eqnarray*}$\,$\par
We denote the unitary matrix
$$(P
\begin{bmatrix}
 1 & 0 & 0 \\
 0 & e^{-i \text{k}} & 0 \\
 0 & 0 & e^{-i \text{l}} \\
\end{bmatrix}
P^\dagger)
\begin{bmatrix}
 e^{-i\omega_1} & 0 & 0 \\
 0 & e^{-i\omega_2} & 0 \\
 0 & 0 & e^{-i\omega_3} \\
\end{bmatrix}$$
as $J$. Given that eigenvectors $\ket{\varphi_m}$ are all real vectors, the matrix $P$ would be a real matrix. Then, $P{\tilde U}P^\dagger=JJ^*$. The determinant of $JJ^*-\mathbb{I}$ is
\begin{equation*}
\text{det}(JJ^*-\mathbb{I})=
\text{det}(J(J^*-J^\dagger))=\text{det}(J)\text{det}(J^*-J^\dagger)
\end{equation*}
Since the matrix $J^*-J^\dagger$ is an anti-symmetrical three-dimensional matrix, its determinant is zero. Therefore, we have proved that 1 is one of the eigenvalues of the matrix $P{\tilde U}P^\dagger$, as well as the matrix $\tilde U$, because unitary transformations will not change the eigenvalues of an operator.\par


\begin{thebibliography}{40}%
\makeatletter
\providecommand \@ifxundefined [1]{%
 \@ifx{#1\undefined}
}%
\providecommand \@ifnum [1]{%
 \ifnum #1\expandafter \@firstoftwo
 \else \expandafter \@secondoftwo
 \fi
}%
\providecommand \@ifx [1]{%
 \ifx #1\expandafter \@firstoftwo
 \else \expandafter \@secondoftwo
 \fi
}%
\providecommand \natexlab [1]{#1}%
\providecommand \enquote  [1]{``#1''}%
\providecommand \bibnamefont  [1]{#1}%
\providecommand \bibfnamefont [1]{#1}%
\providecommand \citenamefont [1]{#1}%
\providecommand \href@noop [0]{\@secondoftwo}%
\providecommand \href [0]{\begingroup \@sanitize@url \@href}%
\providecommand \@href[1]{\@@startlink{#1}\@@href}%
\providecommand \@@href[1]{\endgroup#1\@@endlink}%
\providecommand \@sanitize@url [0]{\catcode `\\12\catcode `\$12\catcode
  `\&12\catcode `\#12\catcode `\^12\catcode `\_12\catcode `\%12\relax}%
\providecommand \@@startlink[1]{}%
\providecommand \@@endlink[0]{}%
\providecommand \url  [0]{\begingroup\@sanitize@url \@url }%
\providecommand \@url [1]{\endgroup\@href {#1}{\urlprefix }}%
\providecommand \urlprefix  [0]{URL }%
\providecommand \Eprint [0]{\href }%
\providecommand \doibase [0]{http://dx.doi.org/}%
\providecommand \selectlanguage [0]{\@gobble}%
\providecommand \bibinfo  [0]{\@secondoftwo}%
\providecommand \bibfield  [0]{\@secondoftwo}%
\providecommand \translation [1]{[#1]}%
\providecommand \BibitemOpen [0]{}%
\providecommand \bibitemStop [0]{}%
\providecommand \bibitemNoStop [0]{.\EOS\space}%
\providecommand \EOS [0]{\spacefactor3000\relax}%
\providecommand \BibitemShut  [1]{\csname bibitem#1\endcsname}%
\let\auto@bib@innerbib\@empty
%</preamble>
\bibitem [{\citenamefont {Aharonov}\ \emph {et~al.}(1993)\citenamefont
  {Aharonov}, \citenamefont {Davidovich},\ and\ \citenamefont
  {Zagury}}]{PhysRevA.48.1687}%
  \BibitemOpen
  \bibfield  {author} {\bibinfo {author} {\bibfnamefont {Y.}~\bibnamefont
  {Aharonov}}, \bibinfo {author} {\bibfnamefont {L.}~\bibnamefont
  {Davidovich}}, \ and\ \bibinfo {author} {\bibfnamefont {N.}~\bibnamefont
  {Zagury}},\ }\href {\doibase 10.1103/PhysRevA.48.1687} {\bibfield  {journal}
  {\bibinfo  {journal} {Phys. Rev. A}\ }\textbf {\bibinfo {volume} {48}},\
  \bibinfo {pages} {1687} (\bibinfo {year} {1993})}\BibitemShut {NoStop}%
\bibitem [{\citenamefont {Meyer}(1996)}]{Meyer1996}%
  \BibitemOpen
  \bibfield  {author} {\bibinfo {author} {\bibfnamefont {D.}~\bibnamefont
  {Meyer}},\ }\href {\doibase 10.1007/BF02199356} {\bibfield  {journal}
  {\bibinfo  {journal} {Journal of Statistical Physics}\ }\textbf {\bibinfo
  {volume} {85}},\ \bibinfo {pages} {551} (\bibinfo {year} {1996})}\BibitemShut
  {NoStop}%
\bibitem [{\citenamefont {Venegas-Andraca}(2012)}]{ComprehensiveReview}%
  \BibitemOpen
  \bibfield  {author} {\bibinfo {author} {\bibfnamefont {S.}~\bibnamefont
  {Venegas-Andraca}},\ }\href {\doibase 10.1007/s11128-012-0432-5} {\bibfield
  {journal} {\bibinfo  {journal} {Quantum Information Processing}\ }\textbf
  {\bibinfo {volume} {11}},\ \bibinfo {pages} {1015} (\bibinfo {year}
  {2012})}\BibitemShut {NoStop}%
\bibitem [{\citenamefont {Strauch}(2006)}]{PhysRevA.74.030301}%
  \BibitemOpen
  \bibfield  {author} {\bibinfo {author} {\bibfnamefont {F.~W.}\ \bibnamefont
  {Strauch}},\ }\href {\doibase 10.1103/PhysRevA.74.030301} {\bibfield
  {journal} {\bibinfo  {journal} {Phys. Rev. A}\ }\textbf {\bibinfo {volume}
  {74}},\ \bibinfo {pages} {030301} (\bibinfo {year} {2006})}\BibitemShut
  {NoStop}%
\bibitem [{\citenamefont {Mackay}\ \emph {et~al.}(2002)\citenamefont {Mackay},
  \citenamefont {Bartlett}, \citenamefont {Stephenson},\ and\ \citenamefont
  {Sanders}}]{Mackay2002}%
  \BibitemOpen
  \bibfield  {author} {\bibinfo {author} {\bibfnamefont {T.~D.}\ \bibnamefont
  {Mackay}}, \bibinfo {author} {\bibfnamefont {S.~D.}\ \bibnamefont
  {Bartlett}}, \bibinfo {author} {\bibfnamefont {L.~T.}\ \bibnamefont
  {Stephenson}}, \ and\ \bibinfo {author} {\bibfnamefont {B.~C.}\ \bibnamefont
  {Sanders}},\ }\href {http://stacks.iop.org/0305-4470/35/i=12/a=304}
  {\bibfield  {journal} {\bibinfo  {journal} {Journal of Physics A:
  Mathematical and General}\ }\textbf {\bibinfo {volume} {35}},\ \bibinfo
  {pages} {2745} (\bibinfo {year} {2002})}\BibitemShut {NoStop}%
\bibitem [{\citenamefont {Childs}\ \emph {et~al.}(2002)\citenamefont {Childs},
  \citenamefont {Farhi},\ and\ \citenamefont {Gutmann}}]{childs2002}%
  \BibitemOpen
  \bibfield  {author} {\bibinfo {author} {\bibfnamefont {A.}~\bibnamefont
  {Childs}}, \bibinfo {author} {\bibfnamefont {E.}~\bibnamefont {Farhi}}, \
  and\ \bibinfo {author} {\bibfnamefont {S.}~\bibnamefont {Gutmann}},\ }\href
  {\doibase 10.1023/A:1019609420309} {\bibfield  {journal} {\bibinfo  {journal}
  {Quantum Information Processing}\ }\textbf {\bibinfo {volume} {1}},\ \bibinfo
  {pages} {35} (\bibinfo {year} {2002})}\BibitemShut {NoStop}%
\bibitem [{\citenamefont {Grimmett}\ \emph {et~al.}(2004)\citenamefont
  {Grimmett}, \citenamefont {Janson},\ and\ \citenamefont
  {Scudo}}]{PhysRevE.69.026119}%
  \BibitemOpen
  \bibfield  {author} {\bibinfo {author} {\bibfnamefont {G.}~\bibnamefont
  {Grimmett}}, \bibinfo {author} {\bibfnamefont {S.}~\bibnamefont {Janson}}, \
  and\ \bibinfo {author} {\bibfnamefont {P.~F.}\ \bibnamefont {Scudo}},\ }\href
  {\doibase 10.1103/PhysRevE.69.026119} {\bibfield  {journal} {\bibinfo
  {journal} {Phys. Rev. E}\ }\textbf {\bibinfo {volume} {69}},\ \bibinfo
  {pages} {026119} (\bibinfo {year} {2004})}\BibitemShut {NoStop}%
\bibitem [{\citenamefont {Carneiro}\ \emph {et~al.}(2005)\citenamefont
  {Carneiro}, \citenamefont {Loo}, \citenamefont {Xu}, \citenamefont {Girerd},
  \citenamefont {Kendon},\ and\ \citenamefont {Knight}}]{Carneiro2005}%
  \BibitemOpen
  \bibfield  {author} {\bibinfo {author} {\bibfnamefont {I.}~\bibnamefont
  {Carneiro}}, \bibinfo {author} {\bibfnamefont {M.}~\bibnamefont {Loo}},
  \bibinfo {author} {\bibfnamefont {X.}~\bibnamefont {Xu}}, \bibinfo {author}
  {\bibfnamefont {M.}~\bibnamefont {Girerd}}, \bibinfo {author} {\bibfnamefont
  {V.}~\bibnamefont {Kendon}}, \ and\ \bibinfo {author} {\bibfnamefont {P.~L.}\
  \bibnamefont {Knight}},\ }\href {http://stacks.iop.org/1367-2630/7/i=1/a=156}
  {\bibfield  {journal} {\bibinfo  {journal} {New Journal of Physics}\ }\textbf
  {\bibinfo {volume} {7}},\ \bibinfo {pages} {156} (\bibinfo {year}
  {2005})}\BibitemShut {NoStop}%
\bibitem [{\citenamefont {Schmitz}\ \emph {et~al.}(2009)\citenamefont
  {Schmitz}, \citenamefont {Matjeschk}, \citenamefont {Schneider},
  \citenamefont {Glueckert}, \citenamefont {Enderlein}, \citenamefont {Huber},\
  and\ \citenamefont {Schaetz}}]{Schmitz09}%
  \BibitemOpen
  \bibfield  {author} {\bibinfo {author} {\bibfnamefont {H.}~\bibnamefont
  {Schmitz}}, \bibinfo {author} {\bibfnamefont {R.}~\bibnamefont {Matjeschk}},
  \bibinfo {author} {\bibfnamefont {C.}~\bibnamefont {Schneider}}, \bibinfo
  {author} {\bibfnamefont {J.}~\bibnamefont {Glueckert}}, \bibinfo {author}
  {\bibfnamefont {M.}~\bibnamefont {Enderlein}}, \bibinfo {author}
  {\bibfnamefont {T.}~\bibnamefont {Huber}}, \ and\ \bibinfo {author}
  {\bibfnamefont {T.}~\bibnamefont {Schaetz}},\ }\href {\doibase
  10.1103/PhysRevLett.103.090504} {\bibfield  {journal} {\bibinfo  {journal}
  {Phys. Rev. Lett.}\ }\textbf {\bibinfo {volume} {103}},\ \bibinfo {pages}
  {090504} (\bibinfo {year} {2009})}\BibitemShut {NoStop}%
\bibitem [{\citenamefont {Konno}(2002)}]{Konno2002Quantum}%
  \BibitemOpen
  \bibfield  {author} {\bibinfo {author} {\bibfnamefont {N.}~\bibnamefont
  {Konno}},\ }\href {\doibase 10.1023/A:1023413713008} {\bibfield  {journal}
  {\bibinfo  {journal} {Quantum Information Processing}\ }\textbf {\bibinfo
  {volume} {1}},\ \bibinfo {pages} {345} (\bibinfo {year} {2002})}\BibitemShut
  {NoStop}%
\bibitem [{\citenamefont {Inui}\ \emph {et~al.}(2005)\citenamefont {Inui},
  \citenamefont {Konno},\ and\ \citenamefont {Segawa}}]{PhysRevE.72.056112}%
  \BibitemOpen
  \bibfield  {author} {\bibinfo {author} {\bibfnamefont {N.}~\bibnamefont
  {Inui}}, \bibinfo {author} {\bibfnamefont {N.}~\bibnamefont {Konno}}, \ and\
  \bibinfo {author} {\bibfnamefont {E.}~\bibnamefont {Segawa}},\ }\href
  {\doibase 10.1103/PhysRevE.72.056112} {\bibfield  {journal} {\bibinfo
  {journal} {Phys. Rev. E}\ }\textbf {\bibinfo {volume} {72}},\ \bibinfo
  {pages} {056112} (\bibinfo {year} {2005})}\BibitemShut {NoStop}%
\bibitem [{\citenamefont {Grover}(1996)}]{Grover1996}%
  \BibitemOpen
  \bibfield  {author} {\bibinfo {author} {\bibfnamefont {L.~K.}\ \bibnamefont
  {Grover}},\ }in\ \href {\doibase 10.1145/237814.237866} {\emph {\bibinfo
  {booktitle} {Proceedings of the Twenty-eighth Annual ACM Symposium on Theory
  of Computing}}},\ \bibinfo {series and number} {STOC '96}\ (\bibinfo
  {publisher} {ACM},\ \bibinfo {address} {New York, NY, USA},\ \bibinfo {year}
  {1996})\ pp.\ \bibinfo {pages} {212--219}\BibitemShut {NoStop}%
\bibitem [{\citenamefont {Tregenna}\ \emph {et~al.}(2003)\citenamefont
  {Tregenna}, \citenamefont {Flanagan}, \citenamefont {Maile},\ and\
  \citenamefont {Kendon}}]{Tregenna2003}%
  \BibitemOpen
  \bibfield  {author} {\bibinfo {author} {\bibfnamefont {B.}~\bibnamefont
  {Tregenna}}, \bibinfo {author} {\bibfnamefont {W.}~\bibnamefont {Flanagan}},
  \bibinfo {author} {\bibfnamefont {R.}~\bibnamefont {Maile}}, \ and\ \bibinfo
  {author} {\bibfnamefont {V.}~\bibnamefont {Kendon}},\ }\href
  {http://stacks.iop.org/1367-2630/5/i=1/a=383} {\bibfield  {journal} {\bibinfo
   {journal} {New Journal of Physics}\ }\textbf {\bibinfo {volume} {5}},\
  \bibinfo {pages} {83} (\bibinfo {year} {2003})}\BibitemShut {NoStop}%
\bibitem [{\citenamefont {Inui}\ \emph {et~al.}(2004)\citenamefont {Inui},
  \citenamefont {Konishi},\ and\ \citenamefont {Konno}}]{PhysRevA.69.052323}%
  \BibitemOpen
  \bibfield  {author} {\bibinfo {author} {\bibfnamefont {N.}~\bibnamefont
  {Inui}}, \bibinfo {author} {\bibfnamefont {Y.}~\bibnamefont {Konishi}}, \
  and\ \bibinfo {author} {\bibfnamefont {N.}~\bibnamefont {Konno}},\ }\href
  {\doibase 10.1103/PhysRevA.69.052323} {\bibfield  {journal} {\bibinfo
  {journal} {Phys. Rev. A}\ }\textbf {\bibinfo {volume} {69}},\ \bibinfo
  {pages} {052323} (\bibinfo {year} {2004})}\BibitemShut {NoStop}%
\bibitem [{\citenamefont {Watabe}\ \emph {et~al.}(2008)\citenamefont {Watabe},
  \citenamefont {Kobayashi}, \citenamefont {Katori},\ and\ \citenamefont
  {Konno}}]{PhysRevA.77.062331}%
  \BibitemOpen
  \bibfield  {author} {\bibinfo {author} {\bibfnamefont {K.}~\bibnamefont
  {Watabe}}, \bibinfo {author} {\bibfnamefont {N.}~\bibnamefont {Kobayashi}},
  \bibinfo {author} {\bibfnamefont {M.}~\bibnamefont {Katori}}, \ and\ \bibinfo
  {author} {\bibfnamefont {N.}~\bibnamefont {Konno}},\ }\href {\doibase
  10.1103/PhysRevA.77.062331} {\bibfield  {journal} {\bibinfo  {journal} {Phys.
  Rev. A}\ }\textbf {\bibinfo {volume} {77}},\ \bibinfo {pages} {062331}
  (\bibinfo {year} {2008})}\BibitemShut {NoStop}%
\bibitem [{\citenamefont {\ifmmode \check{S}\else
  \v{S}\fi{}tefa\ifmmode~\check{n}\else \v{n}\fi{}\'ak}\ \emph
  {et~al.}(2008{\natexlab{a}})\citenamefont {\ifmmode \check{S}\else
  \v{S}\fi{}tefa\ifmmode~\check{n}\else \v{n}\fi{}\'ak}, \citenamefont {Jex},\
  and\ \citenamefont {Kiss}}]{PhysRevLett.100.020501}%
  \BibitemOpen
  \bibfield  {author} {\bibinfo {author} {\bibfnamefont {M.}~\bibnamefont
  {\ifmmode \check{S}\else \v{S}\fi{}tefa\ifmmode~\check{n}\else
  \v{n}\fi{}\'ak}}, \bibinfo {author} {\bibfnamefont {I.}~\bibnamefont {Jex}},
  \ and\ \bibinfo {author} {\bibfnamefont {T.}~\bibnamefont {Kiss}},\ }\href
  {\doibase 10.1103/PhysRevLett.100.020501} {\bibfield  {journal} {\bibinfo
  {journal} {Phys. Rev. Lett.}\ }\textbf {\bibinfo {volume} {100}},\ \bibinfo
  {pages} {020501} (\bibinfo {year} {2008}{\natexlab{a}})}\BibitemShut
  {NoStop}%
\bibitem [{\citenamefont {\ifmmode \check{S}\else
  \v{S}\fi{}tefa\ifmmode~\check{n}\else \v{n}\fi{}\'ak}\ \emph
  {et~al.}(2008{\natexlab{b}})\citenamefont {\ifmmode \check{S}\else
  \v{S}\fi{}tefa\ifmmode~\check{n}\else \v{n}\fi{}\'ak}, \citenamefont {Kiss},\
  and\ \citenamefont {Jex}}]{PhysRevA.78.032306}%
  \BibitemOpen
  \bibfield  {author} {\bibinfo {author} {\bibfnamefont {M.}~\bibnamefont
  {\ifmmode \check{S}\else \v{S}\fi{}tefa\ifmmode~\check{n}\else
  \v{n}\fi{}\'ak}}, \bibinfo {author} {\bibfnamefont {T.}~\bibnamefont {Kiss}},
  \ and\ \bibinfo {author} {\bibfnamefont {I.}~\bibnamefont {Jex}},\ }\href
  {\doibase 10.1103/PhysRevA.78.032306} {\bibfield  {journal} {\bibinfo
  {journal} {Phys. Rev. A}\ }\textbf {\bibinfo {volume} {78}},\ \bibinfo
  {pages} {032306} (\bibinfo {year} {2008}{\natexlab{b}})}\BibitemShut
  {NoStop}%
\bibitem [{\citenamefont {Koll\'ar}\ \emph {et~al.}(2015)\citenamefont
  {Koll\'ar}, \citenamefont {Kiss},\ and\ \citenamefont
  {Jex}}]{PhysRevA.91.022308}%
  \BibitemOpen
  \bibfield  {author} {\bibinfo {author} {\bibfnamefont {B.}~\bibnamefont
  {Koll\'ar}}, \bibinfo {author} {\bibfnamefont {T.}~\bibnamefont {Kiss}}, \
  and\ \bibinfo {author} {\bibfnamefont {I.}~\bibnamefont {Jex}},\ }\href
  {\doibase 10.1103/PhysRevA.91.022308} {\bibfield  {journal} {\bibinfo
  {journal} {Phys. Rev. A}\ }\textbf {\bibinfo {volume} {91}},\ \bibinfo
  {pages} {022308} (\bibinfo {year} {2015})}\BibitemShut {NoStop}%
\bibitem [{\citenamefont {Di~Franco}\ and\ \citenamefont
  {Paternostro}(2015)}]{PhysRevA.91.012328}%
  \BibitemOpen
  \bibfield  {author} {\bibinfo {author} {\bibfnamefont {C.}~\bibnamefont
  {Di~Franco}}\ and\ \bibinfo {author} {\bibfnamefont {M.}~\bibnamefont
  {Paternostro}},\ }\href {\doibase 10.1103/PhysRevA.91.012328} {\bibfield
  {journal} {\bibinfo  {journal} {Phys. Rev. A}\ }\textbf {\bibinfo {volume}
  {91}},\ \bibinfo {pages} {012328} (\bibinfo {year} {2015})}\BibitemShut
  {NoStop}%
\bibitem [{\citenamefont {AMBAINIS}(2003)}]{AMBAINIS2003}%
  \BibitemOpen
  \bibfield  {author} {\bibinfo {author} {\bibfnamefont {A.}~\bibnamefont
  {AMBAINIS}},\ }\href {\doibase 10.1142/S0219749903000383} {\bibfield
  {journal} {\bibinfo  {journal} {International Journal of Quantum
  Information}\ }\textbf {\bibinfo {volume} {01}},\ \bibinfo {pages} {507}
  (\bibinfo {year} {2003})}\BibitemShut {NoStop}%
\bibitem [{\citenamefont {Shenvi}\ \emph {et~al.}(2003)\citenamefont {Shenvi},
  \citenamefont {Kempe},\ and\ \citenamefont {Whaley}}]{PhysRevA.67.052307}%
  \BibitemOpen
  \bibfield  {author} {\bibinfo {author} {\bibfnamefont {N.}~\bibnamefont
  {Shenvi}}, \bibinfo {author} {\bibfnamefont {J.}~\bibnamefont {Kempe}}, \
  and\ \bibinfo {author} {\bibfnamefont {K.~B.}\ \bibnamefont {Whaley}},\
  }\href {\doibase 10.1103/PhysRevA.67.052307} {\bibfield  {journal} {\bibinfo
  {journal} {Phys. Rev. A}\ }\textbf {\bibinfo {volume} {67}},\ \bibinfo
  {pages} {052307} (\bibinfo {year} {2003})}\BibitemShut {NoStop}%
\bibitem [{\citenamefont {Abal}\ \emph {et~al.}(2010)\citenamefont {Abal},
  \citenamefont {Donangelo}, \citenamefont {Marquezino},\ and\ \citenamefont
  {Portugal}}]{Abal2010}%
  \BibitemOpen
  \bibfield  {author} {\bibinfo {author} {\bibfnamefont {G.}~\bibnamefont
  {Abal}}, \bibinfo {author} {\bibfnamefont {R.}~\bibnamefont {Donangelo}},
  \bibinfo {author} {\bibfnamefont {F.~l.}\ \bibnamefont {Marquezino}}, \ and\
  \bibinfo {author} {\bibfnamefont {R.}~\bibnamefont {Portugal}},\ }\href
  {\doibase 10.1017/S0960129510000332} {\bibfield  {journal} {\bibinfo
  {journal} {Mathematical. Structures in Comp. Sci.}\ }\textbf {\bibinfo
  {volume} {20}},\ \bibinfo {pages} {999} (\bibinfo {year} {2010})}\BibitemShut
  {NoStop}%
\bibitem [{\citenamefont {Childs}\ and\ \citenamefont
  {Goldstone}(2004)}]{Childs:2003db}%
  \BibitemOpen
  \bibfield  {author} {\bibinfo {author} {\bibfnamefont {A.~M.}\ \bibnamefont
  {Childs}}\ and\ \bibinfo {author} {\bibfnamefont {J.}~\bibnamefont
  {Goldstone}},\ }\href {\doibase 10.1103/PhysRevA.70.022314} {\bibfield
  {journal} {\bibinfo  {journal} {Phys. Rev. A}\ }\textbf {\bibinfo {volume}
  {70}},\ \bibinfo {pages} {022314} (\bibinfo {year} {2004})}\BibitemShut
  {NoStop}%
\bibitem [{\citenamefont {Childs}\ \emph {et~al.}(2003)\citenamefont {Childs},
  \citenamefont {Cleve}, \citenamefont {Deotto}, \citenamefont {Farhi},
  \citenamefont {Gutmann},\ and\ \citenamefont {Spielman}}]{Childs2003}%
  \BibitemOpen
  \bibfield  {author} {\bibinfo {author} {\bibfnamefont {A.~M.}\ \bibnamefont
  {Childs}}, \bibinfo {author} {\bibfnamefont {R.}~\bibnamefont {Cleve}},
  \bibinfo {author} {\bibfnamefont {E.}~\bibnamefont {Deotto}}, \bibinfo
  {author} {\bibfnamefont {E.}~\bibnamefont {Farhi}}, \bibinfo {author}
  {\bibfnamefont {S.}~\bibnamefont {Gutmann}}, \ and\ \bibinfo {author}
  {\bibfnamefont {D.~A.}\ \bibnamefont {Spielman}},\ }in\ \href {\doibase
  10.1145/780542.780552} {\emph {\bibinfo {booktitle} {Proceedings of the
  Thirty-fifth Annual ACM Symposium on Theory of Computing}}},\ \bibinfo
  {series and number} {STOC '03}\ (\bibinfo  {publisher} {ACM},\ \bibinfo
  {address} {New York, NY, USA},\ \bibinfo {year} {2003})\ pp.\ \bibinfo
  {pages} {59--68}\BibitemShut {NoStop}%
\bibitem [{\citenamefont {Bouwmeester}\ \emph {et~al.}(1999)\citenamefont
  {Bouwmeester}, \citenamefont {Marzoli}, \citenamefont {Karman}, \citenamefont
  {Schleich},\ and\ \citenamefont {Woerdman}}]{PhysRevA.61.013410}%
  \BibitemOpen
  \bibfield  {author} {\bibinfo {author} {\bibfnamefont {D.}~\bibnamefont
  {Bouwmeester}}, \bibinfo {author} {\bibfnamefont {I.}~\bibnamefont
  {Marzoli}}, \bibinfo {author} {\bibfnamefont {G.~P.}\ \bibnamefont {Karman}},
  \bibinfo {author} {\bibfnamefont {W.}~\bibnamefont {Schleich}}, \ and\
  \bibinfo {author} {\bibfnamefont {J.~P.}\ \bibnamefont {Woerdman}},\ }\href
  {\doibase 10.1103/PhysRevA.61.013410} {\bibfield  {journal} {\bibinfo
  {journal} {Phys. Rev. A}\ }\textbf {\bibinfo {volume} {61}},\ \bibinfo
  {pages} {013410} (\bibinfo {year} {1999})}\BibitemShut {NoStop}%
\bibitem [{\citenamefont {Travaglione}\ and\ \citenamefont
  {Milburn}(2002)}]{PhysRevA.65.032310}%
  \BibitemOpen
  \bibfield  {author} {\bibinfo {author} {\bibfnamefont {B.~C.}\ \bibnamefont
  {Travaglione}}\ and\ \bibinfo {author} {\bibfnamefont {G.~J.}\ \bibnamefont
  {Milburn}},\ }\href {\doibase 10.1103/PhysRevA.65.032310} {\bibfield
  {journal} {\bibinfo  {journal} {Phys. Rev. A}\ }\textbf {\bibinfo {volume}
  {65}},\ \bibinfo {pages} {032310} (\bibinfo {year} {2002})}\BibitemShut
  {NoStop}%
\bibitem [{\citenamefont {Perets}\ \emph {et~al.}(2008)\citenamefont {Perets},
  \citenamefont {Lahini}, \citenamefont {Pozzi}, \citenamefont {Sorel},
  \citenamefont {Morandotti},\ and\ \citenamefont
  {Silberberg}}]{PhysRevLett.100.170506}%
  \BibitemOpen
  \bibfield  {author} {\bibinfo {author} {\bibfnamefont {H.~B.}\ \bibnamefont
  {Perets}}, \bibinfo {author} {\bibfnamefont {Y.}~\bibnamefont {Lahini}},
  \bibinfo {author} {\bibfnamefont {F.}~\bibnamefont {Pozzi}}, \bibinfo
  {author} {\bibfnamefont {M.}~\bibnamefont {Sorel}}, \bibinfo {author}
  {\bibfnamefont {R.}~\bibnamefont {Morandotti}}, \ and\ \bibinfo {author}
  {\bibfnamefont {Y.}~\bibnamefont {Silberberg}},\ }\href {\doibase
  10.1103/PhysRevLett.100.170506} {\bibfield  {journal} {\bibinfo  {journal}
  {Phys. Rev. Lett.}\ }\textbf {\bibinfo {volume} {100}},\ \bibinfo {pages}
  {170506} (\bibinfo {year} {2008})}\BibitemShut {NoStop}%
\bibitem [{\citenamefont {Karski}\ \emph {et~al.}(2009)\citenamefont {Karski},
  \citenamefont {Förster}, \citenamefont {Choi}, \citenamefont {Steffen},
  \citenamefont {Alt}, \citenamefont {Meschede},\ and\ \citenamefont
  {Widera}}]{Karski10072009}%
  \BibitemOpen
  \bibfield  {author} {\bibinfo {author} {\bibfnamefont {M.}~\bibnamefont
  {Karski}}, \bibinfo {author} {\bibfnamefont {L.}~\bibnamefont {Förster}},
  \bibinfo {author} {\bibfnamefont {J.-M.}\ \bibnamefont {Choi}}, \bibinfo
  {author} {\bibfnamefont {A.}~\bibnamefont {Steffen}}, \bibinfo {author}
  {\bibfnamefont {W.}~\bibnamefont {Alt}}, \bibinfo {author} {\bibfnamefont
  {D.}~\bibnamefont {Meschede}}, \ and\ \bibinfo {author} {\bibfnamefont
  {A.}~\bibnamefont {Widera}},\ }\href {\doibase 10.1126/science.1174436}
  {\bibfield  {journal} {\bibinfo  {journal} {Science}\ }\textbf {\bibinfo
  {volume} {325}},\ \bibinfo {pages} {174} (\bibinfo {year}
  {2009})}\BibitemShut {NoStop}%
\bibitem [{\citenamefont {Svozil\'ik}\ \emph {et~al.}(2012)\citenamefont
  {Svozil\'ik}, \citenamefont {Le\'on-Montiel},\ and\ \citenamefont
  {Torres}}]{PhysRevA.86.052327}%
  \BibitemOpen
  \bibfield  {author} {\bibinfo {author} {\bibfnamefont {J.}~\bibnamefont
  {Svozil\'ik}}, \bibinfo {author} {\bibfnamefont {R.~d.~J.}\ \bibnamefont
  {Le\'on-Montiel}}, \ and\ \bibinfo {author} {\bibfnamefont {J.~P.}\
  \bibnamefont {Torres}},\ }\href {\doibase 10.1103/PhysRevA.86.052327}
  {\bibfield  {journal} {\bibinfo  {journal} {Phys. Rev. A}\ }\textbf {\bibinfo
  {volume} {86}},\ \bibinfo {pages} {052327} (\bibinfo {year}
  {2012})}\BibitemShut {NoStop}%
\bibitem [{\citenamefont {Jeong}\ \emph {et~al.}(2013)\citenamefont {Jeong},
  \citenamefont {Di~Franco}, \citenamefont {Lim}, \citenamefont {Kim},\ and\
  \citenamefont {Kim}}]{Jeong:2013cg}%
  \BibitemOpen
  \bibfield  {author} {\bibinfo {author} {\bibfnamefont {Y.-C.}\ \bibnamefont
  {Jeong}}, \bibinfo {author} {\bibfnamefont {C.}~\bibnamefont {Di~Franco}},
  \bibinfo {author} {\bibfnamefont {H.-T.}\ \bibnamefont {Lim}}, \bibinfo
  {author} {\bibfnamefont {M.~S.}\ \bibnamefont {Kim}}, \ and\ \bibinfo
  {author} {\bibfnamefont {Y.-H.}\ \bibnamefont {Kim}},\ }\href {\doibase
  10.1038/ncomms3471} {\bibfield  {journal} {\bibinfo  {journal} {Nature
  Communications}\ }\textbf {\bibinfo {volume} {4}},\ \bibinfo {pages} {2471}
  (\bibinfo {year} {2013})}\BibitemShut {NoStop}%
\bibitem [{\citenamefont {Wang}\ and\ \citenamefont
  {Manouchehri}(2013)}]{wang2013physical}%
  \BibitemOpen
  \bibfield  {author} {\bibinfo {author} {\bibfnamefont {J.}~\bibnamefont
  {Wang}}\ and\ \bibinfo {author} {\bibfnamefont {K.}~\bibnamefont
  {Manouchehri}},\ }\href {\doibase 10.1007/978-3-642-36014-5} {\emph {\bibinfo
  {title} {Physical Implementation of Quantum Walks}}}\ (\bibinfo  {publisher}
  {Springer},\ \bibinfo {year} {2013})\BibitemShut {NoStop}%
\bibitem [{\citenamefont {Pakdel}\ \emph {et~al.}(2012)\citenamefont {Pakdel},
  \citenamefont {Zhi}, \citenamefont {Bando},\ and\ \citenamefont
  {Golberg}}]{Pakdel2012256}%
  \BibitemOpen
  \bibfield  {author} {\bibinfo {author} {\bibfnamefont {A.}~\bibnamefont
  {Pakdel}}, \bibinfo {author} {\bibfnamefont {C.}~\bibnamefont {Zhi}},
  \bibinfo {author} {\bibfnamefont {Y.}~\bibnamefont {Bando}}, \ and\ \bibinfo
  {author} {\bibfnamefont {D.}~\bibnamefont {Golberg}},\ }\href {\doibase
  http://dx.doi.org/10.1016/S1369-7021(12)70116-5} {\bibfield  {journal}
  {\bibinfo  {journal} {Materials Today}\ }\textbf {\bibinfo {volume} {15}},\
  \bibinfo {pages} {256 } (\bibinfo {year} {2012})}\BibitemShut {NoStop}%
\bibitem [{\citenamefont {Kittel}\ and\ \citenamefont
  {Holcomb}(1967)}]{Kittel}%
  \BibitemOpen
  \bibfield  {author} {\bibinfo {author} {\bibfnamefont {C.}~\bibnamefont
  {Kittel}}\ and\ \bibinfo {author} {\bibfnamefont {D.~F.}\ \bibnamefont
  {Holcomb}},\ }\href {\doibase http://dx.doi.org/10.1119/1.1974177} {\bibfield
   {journal} {\bibinfo  {journal} {American Journal of Physics}\ }\textbf
  {\bibinfo {volume} {35}},\ \bibinfo {pages} {547} (\bibinfo {year}
  {1967})}\BibitemShut {NoStop}%
\bibitem [{\citenamefont {Zhai}\ and\ \citenamefont {Jin}(2014)}]{Zhai2014}%
  \BibitemOpen
  \bibfield  {author} {\bibinfo {author} {\bibfnamefont {X.}~\bibnamefont
  {Zhai}}\ and\ \bibinfo {author} {\bibfnamefont {G.}~\bibnamefont {Jin}},\
  }\href {http://stacks.iop.org/0953-8984/26/i=1/a=015304} {\bibfield
  {journal} {\bibinfo  {journal} {Journal of Physics: Condensed Matter}\
  }\textbf {\bibinfo {volume} {26}},\ \bibinfo {pages} {015304} (\bibinfo
  {year} {2014})}\BibitemShut {NoStop}%
\bibitem [{\citenamefont {Ambainis}\ \emph {et~al.}(2005)\citenamefont
  {Ambainis}, \citenamefont {Kempe},\ and\ \citenamefont
  {Rivosh}}]{Ambainis2005}%
  \BibitemOpen
  \bibfield  {author} {\bibinfo {author} {\bibfnamefont {A.}~\bibnamefont
  {Ambainis}}, \bibinfo {author} {\bibfnamefont {J.}~\bibnamefont {Kempe}}, \
  and\ \bibinfo {author} {\bibfnamefont {A.}~\bibnamefont {Rivosh}},\ }in\
  \href {http://dl.acm.org/citation.cfm?id=1070432.1070590} {\emph {\bibinfo
  {booktitle} {Proceedings of the Sixteenth Annual ACM-SIAM Symposium on
  Discrete Algorithms}}},\ \bibinfo {series and number} {SODA '05}\ (\bibinfo
  {publisher} {Society for Industrial and Applied Mathematics},\ \bibinfo
  {address} {Philadelphia, PA, USA},\ \bibinfo {year} {2005})\ pp.\ \bibinfo
  {pages} {1099--1108}\BibitemShut {NoStop}%
\bibitem [{\citenamefont {Machida}(2015)}]{machida2015limit}%
  \BibitemOpen
  \bibfield  {author} {\bibinfo {author} {\bibfnamefont {T.}~\bibnamefont
  {Machida}},\ }\href@noop {} {\bibfield  {journal} {\bibinfo  {journal}
  {arXiv.org}\ } (\bibinfo {year} {2015})},\ \Eprint
  {http://arxiv.org/abs/quant-ph/1502.06453v1} {quant-ph/1502.06453v1}
  \BibitemShut {NoStop}%
\bibitem [{\citenamefont {Higuchi}\ \emph {et~al.}(2014)\citenamefont
  {Higuchi}, \citenamefont {Konno}, \citenamefont {Sato},\ and\ \citenamefont
  {Segawa}}]{Higuchi20144197}%
  \BibitemOpen
  \bibfield  {author} {\bibinfo {author} {\bibfnamefont {Y.}~\bibnamefont
  {Higuchi}}, \bibinfo {author} {\bibfnamefont {N.}~\bibnamefont {Konno}},
  \bibinfo {author} {\bibfnamefont {I.}~\bibnamefont {Sato}}, \ and\ \bibinfo
  {author} {\bibfnamefont {E.}~\bibnamefont {Segawa}},\ }\href {\doibase
  http://dx.doi.org/10.1016/j.jfa.2014.09.003} {\bibfield  {journal} {\bibinfo
  {journal} {Journal of Functional Analysis}\ }\textbf {\bibinfo {volume}
  {267}},\ \bibinfo {pages} {4197 } (\bibinfo {year} {2014})}\BibitemShut
  {NoStop}%
\bibitem [{\citenamefont {Tilma}\ and\ \citenamefont
  {Sudarshan}(2002)}]{Tilma2002Generalized}%
  \BibitemOpen
  \bibfield  {author} {\bibinfo {author} {\bibfnamefont {T.}~\bibnamefont
  {Tilma}}\ and\ \bibinfo {author} {\bibfnamefont {E.~C.~G.}\ \bibnamefont
  {Sudarshan}},\ }\href {http://stacks.iop.org/0305-4470/35/i=48/a=316}
  {\bibfield  {journal} {\bibinfo  {journal} {Journal of Physics A:
  Mathematical and General}\ }\textbf {\bibinfo {volume} {35}},\ \bibinfo
  {pages} {10467} (\bibinfo {year} {2002})}\BibitemShut {NoStop}%
\bibitem [{\citenamefont {Byrd}(1998)}]{Byrd98}%
  \BibitemOpen
  \bibfield  {author} {\bibinfo {author} {\bibfnamefont {M.}~\bibnamefont
  {Byrd}},\ }\href {\doibase 10.1063/1.532618} {\bibfield  {journal} {\bibinfo
  {journal} {Journal of Mathematical Physics}\ }\textbf {\bibinfo {volume}
  {39}},\ \bibinfo {pages} {6125} (\bibinfo {year} {1998})}\BibitemShut
  {NoStop}%
\bibitem [{\citenamefont {Koll\'ar}\ \emph {et~al.}(2010)\citenamefont
  {Koll\'ar}, \citenamefont {\ifmmode \check{S}\else
  \v{S}\fi{}tefa\ifmmode~\check{n}\else \v{n}\fi{}\'ak}, \citenamefont {Kiss},\
  and\ \citenamefont {Jex}}]{PhysRevA.82.012303}%
  \BibitemOpen
  \bibfield  {author} {\bibinfo {author} {\bibfnamefont {B.}~\bibnamefont
  {Koll\'ar}}, \bibinfo {author} {\bibfnamefont {M.}~\bibnamefont {\ifmmode
  \check{S}\else \v{S}\fi{}tefa\ifmmode~\check{n}\else \v{n}\fi{}\'ak}},
  \bibinfo {author} {\bibfnamefont {T.}~\bibnamefont {Kiss}}, \ and\ \bibinfo
  {author} {\bibfnamefont {I.}~\bibnamefont {Jex}},\ }\href {\doibase
  10.1103/PhysRevA.82.012303} {\bibfield  {journal} {\bibinfo  {journal} {Phys.
  Rev. A}\ }\textbf {\bibinfo {volume} {82}},\ \bibinfo {pages} {012303}
  (\bibinfo {year} {2010})}\BibitemShut {NoStop}%
\end{thebibliography}
\end{document}